\newcommand{\eps}{\ensuremath{\varepsilon}}
\begin{document}

\mainmatter

\title{Detecting Useless Transitions in Pushdown Automata}

\author{Wan Fokkink \and Dick Grune \and Brinio Hond \and Peter Rutgers}
\institute{VU University Amsterdam, Department of Computer Science}

\maketitle

\begin{abstract}
Pushdown automata may contain transitions that are never used in any accepting run of the automaton.
We present an algorithm for detecting such useless transitions.
A finite automaton that captures the possible stack content during runs of the pushdown automaton,
is first constructed in a forward procedure to determine which transitions are reachable, and then
employed in a backward procedure to determine which of these transitions can lead to a final state.
\end{abstract}

\section{Introduction}\label{sec:intro2}

Context-free languages are used in language specification, parsing, and code optimization.
They are defined by means of a context-free grammar or a pushdown automaton (pda).
Some languages can be specified more efficiently by a pda
than by a context-free grammar, as shown by Goldstine, Price, and Wotschke \cite{GoldstinePW82econ}.
Pda's are at the root of deterministic parsers for context-free languages (notably LL, LR),
see e.g.\ \cite{Substring.Nederhof.1996,LR.Bertsch.1999}.
We consider pda's in which any number of symbols can be popped
from as well as pushed onto the stack in one transition.
Popping zero or multiple symbols is useful in bottom-up parsing, and facilitates the reversal of a pda.

For context-free grammars, it is rather straightforward to determine
whether a production is \emph{useless}, i.e., cannot occur in a derivation from
the start variable to a string of terminal symbols; such a method is discussed
in many textbooks on formal languages (e.g., \cite[Theorem 6.2]{Linz11}).
It consists of two parts: Detect which variables are reachable from the start variable,
and which variables can be transformed into a string of terminal symbols.
Productions that contain a useless variable, not satisfying these two properties,
can be removed from the grammar without changing the associated language.

We present an algorithm to detect whether a transition in a pda is useless,
meaning that no run of the pda from the initial configuration to a final state
includes this transition. Such a transition can be removed from the pda without
changing the language accepted by the pda. Removing useless transitions, which
may improve the performance of running the pda, is especially sensible if
the pda has been generated automatically, because then there tend to be useless
transitions present.

Similar to detecting useless variables in context-free grammars,
our algorithm for detecting useless transitions in a pda consists of
two parts. The first part finds which transitions are not reachable from
the initial configuration.
Here we exploit an algorithm by Finkel, Willems and Wolper~\cite{FinkelWW97}
to construct a finite automaton (nfa) that captures exactly all
possible stacks in the reachable configurations of a pda.
Their approach is modified to take into account that multiple symbols may be popped from
the stack at once. The second part of our algorithm finds after which transitions
it is impossible to reach a final state.  Here we use the nfa
constructed in the first part to compute in a backward fashion which transitions
can lead to a final state in the pda.

We prove that the algorithm marks exactly the useless transitions.
The worst-case time complexity of the algorithm is $O(Q^4T)$, with
$Q$ the number of states and $T$ the number of transitions of the pda.
This worst case actually only occurs in the unlikely case that the nfa
is constructed over a large number of iterations, is saturated with
$\eps$-transitions, and contains a lot of backward nondeterminism.
A prototype implementation of the algorithm exhibits a good performance.

An alternative approach is to transform the pda under consideration into an equivalent
context-free grammar, and then determine the useless productions.
Another alternative approach is to check for each transition separately whether it is useless:
Provide the transition with a special input symbol $\xi$, all other transitions in the pda
with empty input $\eps$, and check whether the language accepted by the resulting pda intersected
with the regular language $\xi^+$ is empty. Checking emptiness of pda's is generally
performed by a conversion to a context-free grammar.
Disadvantage of these approaches is that the resulting grammar tends to be much larger than the original pda.

Bouajjani, Esparza and Maler \cite{BouajjaniEM97} employed a method similar to the one in \cite{FinkelWW97}
to capture the reachable configurations of a pda via an nfa, in the context of model checking infinite-state systems.
Griffin~\cite{Griffin06} showed how to detect which transitions are reachable
from the initial configuration in a deterministic pushdown automaton (dpda).
For each transition, the algorithm creates a temporary dpda in which the
successive state of the transition is set to a new, final state; all other
states in the temporary dpda are made non-final. Then it is checked whether the
language generated by the dpda is empty; if it is, the transition is unreachable.
This algorithm determines which transitions are reachable
from the initial configuration, but not which transitions can lead to a final state.
Vice versa, Kutrib and Malcher \cite{KutribMalcher12} studied reversibility of dpda's.


\section{Preliminaries}

\begin{definition}\label{def:NPDA}
A nondeterministic pushdown automaton (pda) consists of
a finite set of states $Q$, a finite input alphabet $\Sigma$, a finite stack alphabet $\Gamma$,
a finite transition relation $\delta:Q \times (\Sigma\cup\{\eps\}) \times \Gamma^* \rightarrow 2^{Q \times \Gamma^*}$,
an initial state $q_0$, and a set $F$ of final states.
\end{definition}

\noindent
In this definition, $\eps$ denotes the empty string. Note that
zero or multiple symbols can be popped from the stack in one transition.
It is assumed that the initial stack is empty. (An arbitrary initial stack $\sigma$ can be constructed
by adding a new initial state $\hat{q_0}$ and a transition $\delta(\hat{q_0},\eps,\eps)=\{(q_0,\sigma)\}$.)

A configuration consists of a state from $Q$ together with a stack from $\Gamma^*$.
We let $a,b,c,d$ denote elements in $\Gamma$, and $\rho,\sigma,\tau,\upsilon,\zeta$ strings in $\Gamma^*$.
The reverse of a string $\sigma$ is denoted by $\sigma^R$.
A transition $(r,\tau)\in\delta(q,a,\sigma)$ (or $(r,\tau)\in\delta(q,\eps,\sigma)$)
gives rise to moves $(q,\sigma\rho)\stackrel{a}{\rightarrow}(r,\tau\rho)$ (or $(q,\sigma\rho)\stackrel{\eps}{\rightarrow}(r,\tau\rho)$)
between configurations, for any $\rho$.
The language accepted by a pda consists of the strings in $\Sigma^*$ that give rise to a run
of the pda from the initial configuration $(q_0,\eps)$ to a configuration $(r,\sigma)$ with $r\in F$.

A transition in a pda is \emph{useless} if no run of the pda from the initial configuration
$(q_0,\eps)$ to a configuration $(r,\sigma)$ with $r\in F$, for any input string from $\Sigma^*$,
includes this transition. To determine the useless transitions, input strings from $\Sigma^*$ are irrelevant.
The point is that, since a run for any input string suffices to make a transition useful,
we can assume that any desired terminal symbol from $\Sigma$ is available as input at any time.
Input strings from $\Sigma$ are therefore disregarded in our algorithm to detect useless transitions.
(In the context of model-checking infinite-state systems, pda's in which input strings are disregarded
are called ``pushdown systems'' or ``pushdown processes'' \cite{Walukiewicz96}.)

A transition $(r,\tau)\in\delta(q,\sigma)$ is written as $q\stackrel{\sigma/\tau}{\rightarrow}r$.
It gives rise to moves $(q,\sigma\rho)\rightarrow(r,\tau\rho)$.
We write $(s,\upsilon)\rightarrow^*(t,\zeta)$ if there is a run from $(s,\upsilon)$ to $(t,\zeta)$ of the pda,
consisting of zero or more moves.

\begin{definition}\label{def:NFA}
A nondeterministic finite automaton (nfa) consists of
a finite set of states $Q$, a finite input alphabet $\Sigma$,
a transition relation $\delta:Q \times (\Sigma\cup\{\eps\}) \rightarrow 2^Q$, an initial state $q_0$,
and a set $F$ of final states.
\end{definition}

\noindent
In our application of nfa's, the input alphabet is the stack alphabet $\Gamma$ from the pda.

A transition $r\in\delta(q,a)$ is written as $q\stackrel{a}{\rightarrow}r$.
We write $q\stackrel{a_1...a_k}{\leadsto}r$ if there is a path from $q$ to $r$ in the nfa
with consecutive labels $a_1,...,a_k\in\Gamma$.
We write $q\stackrel{a_1...a_k}{\Longrightarrow}r$ if there is such a path from $q$ to $r$,
possibly intertwined with transitions labeled by $\eps$.
The language accepted by an nfa consists of the strings $a_1...a_k$ in $\Sigma^*$ for which there
exists a path $q_0\stackrel{a_1...a_k}{\Longrightarrow}r$ with $r\in F$.

\section{Detecting the useless transitions in a pda}

Our algorithm for detecting useless transitions in a pda summarizes all
reachable configurations of the pda in an nfa. As a first step, an nfa is constructed
that accepts the stacks that can occur during any run of the pda.
A second step determines which transitions can lead to a configuration
from which a final state can be reached. Transitions that cannot be reached from the
initial start (as determined in step $1$), or that cannot lead to a final state (as determined in step $2$)
are useless.

\subsection{Detecting the unreachable transitions}
\label{sec:forward}

A configuration or transition in a pda $P$ is reachable if it is employed in a run of $P$,
starting from the initial configuration. The reachable configurations of $P$ are captured by means of an nfa $N$.
The stacks in $\Gamma^*$ that can occur at a state $q$ in $P$ are accepted at the state $q$ in $N$,
in reverse order. During the construction of $N$, intermediate non-final states
are created when multiple symbols are pushed onto the stack in one transition.
They are denoted by $n,m$, to distinguish them from the final states $q,r,s,t$ that are
inherited from $P$. A state in $N$ that may be either final or non-final is denoted by $x,y,z$.

Fix a pda $P=(Q, \Sigma, \Gamma, \delta, q_0, F)$; as said, we will disregard $\Sigma$.
To achieve a single final state without outgoing transitions that is only reached with an empty stack, a fresh stack symbol $b_0$
is added to $\Gamma$, and the initial stack is $b_0$ (instead of $\eps$). In each run of the pda, $b_0$ is always at the bottom of the stack.
Fresh states $q_e$ and $q_f$ are added to $Q$, and $\delta$ is extended with transitions $q\stackrel{\eps/\eps}{\rightarrow}q_e$ for every $q\in F$,
$q_e\stackrel{a/\eps}{\rightarrow}q_e$ for every $a\in\Gamma\backslash\{b_0\}$, and $q_e\stackrel{b_0/\eps}{\rightarrow}q_f$.
We change $F$ to $\{q_f\}$. The resulting pda is called $P_0$.

Initially the nfa $N$ under construction consists of the transition $m_0\stackrel{b_0}{\rightarrow}q_0$;
the fresh state $m_0$ is non-final and $q_0$ final. Intuitively, this transition builds the initial stack of $P_0$.
The set $U_1$ of unreachable transitions in $P_0$ initially, as an overapproximation, contains all transitions in $P_0$.
The nfa $N$ and the set $U_1$ are constructed as follows.

\vspace{1.5mm}

\noindent
Procedure {\em forward}: For each transition $\theta=q\stackrel{\sigma/\tau}{\rightarrow}r$ in $P_0$ do:
\begin{enumerate}
\item[1.~~]
If $q$ is not a state in $N$, then stop this iteration step.\vspace{1mm}
\item[2.~~]
Determine the set $S_{q,\sigma}$, which either consists only of $q$, if $\sigma=\eps$, or of the states $n$ for which
there exists a path $n\stackrel{a}{\rightarrow}y\stackrel{\sigma'^R}{\Longrightarrow}q$ in $N$, if $\sigma=\sigma'a$.\vspace{1mm}
\item[3.~~]
If $S_{q,\sigma}=\emptyset$, then stop this iteration step.
\item[4.~~]
If $\theta\in U_1$, then delete $\theta$ from $U_1$, and establish a path $\stackrel{\tau^R}{\leadsto}r$ in $N$ (see below); the state $r$ in $N$ is final.\vspace{1mm}
\item[5.~~]
Let $y$ be the first state in the path $\stackrel{\tau^R}{\leadsto}r$. For each state $x\in S_{q,\sigma}$, if the transition $x\stackrel{\eps}{\rightarrow}y$
is not yet present in $N$, then add this transition to $N$.
\end{enumerate}

\vspace{1mm}

\noindent
If $N$ changed during this run, then perform the {\em forward} procedure again, over all transitions in $P_0$.
Else, stop, and return the constructed nfa $N$ and the set $U_1$ of unreachable transitions in $P_0$.
These transitions are then culled from $P_0$, producing the pda $P_1$.
The sets $S_{q,\sigma}$ need to be recomputed in every run of the {\em forward} procedure.
The sets $S_{q,\sigma}$ computed in the last run of the {\em forward} procedure are stored,
as they will be used in the {\em backward} procedure in the next section.

The procedure called in step 4, which establishes a path $\stackrel{a_1...a_k}{\leadsto}z$ in $N$
and returns the first state in this path, is defined as follows.
\begin{itemize}
\item[4.1]
If $k=0$, then return $z$, and stop.\vspace{1mm}
\item[4.2]
If there is a transition $n\stackrel{a_k}{\rightarrow}z$ in $N$ with $n$ non-final (there is at most one such transition),
then establish a path $\stackrel{a_1...a_{k-1}}{\leadsto}n$ in $N$, return the first state in this path, and stop.\vspace{1mm}
\item[4.3]
Add non-final states $n_1,...,n_k$ and transitions $n_1\!\stackrel{a_1}{\rightarrow}\!\cdots n_k\!\stackrel{a_k}{\rightarrow}\!z$ to $N$,
and return $n_1$.
\end{itemize}

The idea behind the construction of $N$ is as follows. Given a transition
$\theta=q\stackrel{\sigma/\tau}{\rightarrow}r$ in $P_0$, pushing $\tau$ onto the stack and moving to state $r$
corresponds to a path $y\stackrel{\tau^R}{\leadsto}r$ in $N$. A state $x$ in $N$ can jump
to the first state in this path if there is a path $x\stackrel{\sigma^R}{\Rightarrow}q$ in $N$,
because then we can reach $q$ from $x$ by pushing $\sigma$ onto the stack.
By executing $\theta$, we pop $\sigma$ from the stack, leading back to $x$, then jump to $y$, and push $\tau$ onto the stack
via the path $y\stackrel{\tau^R}{\leadsto}r$.
This jump is captured in $N$ by an $\eps$-transition from (every possible) $x$ to $y$.
To reduce the number of $\eps$-transitions in $N$, we only consider those $x$ with a path $x\stackrel{\sigma^R}{\Rightarrow}q$
in $N$ that does not start with an $\eps$-transition.

For each transition in $P_0$ at most one path is established in $N$, and for the rest $N$ consists of
$\eps$-transitions (between states in such a path), so the construction of $N$ always terminates.
The set $U_1$ returned at the end contains exactly the unreachable transitions in $P_0$.
A proof of this fact is presented at the end of this section.

We give an example construction of nfa $N$ from a pda.
As usual, the initial state in pda's and nfa's is drawn with an incoming arrow, and final states as a double circle.

\begin{example}
\label{exa:nfa1}
Consider the following pda $P_0$.

\vspace{3mm}

\centerline{\begin{picture}(0,0)%
\includegraphics{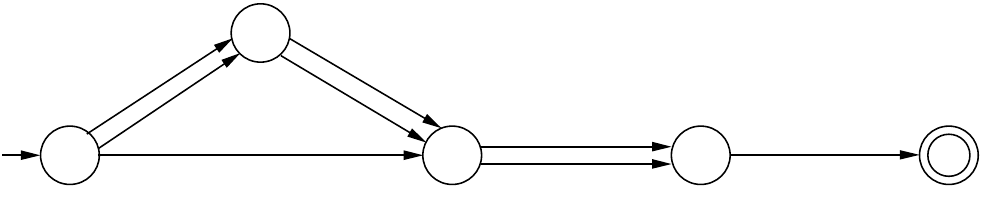}%
\end{picture}%
\setlength{\unitlength}{3947sp}%
\begingroup\makeatletter\ifx\SetFigFont\undefined%
\gdef\SetFigFont#1#2#3#4#5{%
  \reset@font\fontsize{#1}{#2pt}%
  \fontfamily{#3}\fontseries{#4}\fontshape{#5}%
  \selectfont}%
\fi\endgroup%
\begin{picture}(4704,1006)(2314,-760)
\put(2650,-536){\makebox(0,0)[b]{\smash{{\SetFigFont{10}{12.0}{\rmdefault}{\mddefault}{\updefault}{\color[rgb]{0,0,0}$q_0$}%
}}}}
\put(3565, 57){\makebox(0,0)[b]{\smash{{\SetFigFont{10}{12.0}{\rmdefault}{\mddefault}{\updefault}{\color[rgb]{0,0,0}$q_1$}%
}}}}
\put(3516,-633){\makebox(0,0)[b]{\smash{{\SetFigFont{10}{12.0}{\rmdefault}{\mddefault}{\updefault}{\color[rgb]{0,0,0}$\eps/da$}%
}}}}
\put(3957,-322){\makebox(0,0)[rb]{\smash{{\SetFigFont{10}{12.0}{\rmdefault}{\mddefault}{\updefault}{\color[rgb]{0,0,0}$\eps/d$}%
}}}}
\put(3080,-134){\makebox(0,0)[rb]{\smash{{\SetFigFont{10}{12.0}{\rmdefault}{\mddefault}{\updefault}{\color[rgb]{0,0,0}$\eps/a$}%
}}}}
\put(3142,-322){\makebox(0,0)[lb]{\smash{{\SetFigFont{10}{12.0}{\rmdefault}{\mddefault}{\updefault}{\color[rgb]{0,0,0}$\eps/b$}%
}}}}
\put(4485,-536){\makebox(0,0)[b]{\smash{{\SetFigFont{10}{12.0}{\rmdefault}{\mddefault}{\updefault}{\color[rgb]{0,0,0}$q_2$}%
}}}}
\put(5045,-696){\makebox(0,0)[b]{\smash{{\SetFigFont{10}{12.0}{\rmdefault}{\mddefault}{\updefault}{\color[rgb]{0,0,0}$db/\eps$}%
}}}}
\put(5045,-385){\makebox(0,0)[b]{\smash{{\SetFigFont{10}{12.0}{\rmdefault}{\mddefault}{\updefault}{\color[rgb]{0,0,0}$ca/\eps$}%
}}}}
\put(4112,-134){\makebox(0,0)[lb]{\smash{{\SetFigFont{10}{12.0}{\rmdefault}{\mddefault}{\updefault}{\color[rgb]{0,0,0}$\eps/c$}%
}}}}
\put(6216,-422){\makebox(0,0)[b]{\smash{{\SetFigFont{10}{12.0}{\rmdefault}{\mddefault}{\updefault}{\color[rgb]{0,0,0}$b_0/\eps$}%
}}}}
\put(5678,-536){\makebox(0,0)[b]{\smash{{\SetFigFont{10}{12.0}{\rmdefault}{\mddefault}{\updefault}{\color[rgb]{0,0,0}$q_3$}%
}}}}
\put(6869,-536){\makebox(0,0)[b]{\smash{{\SetFigFont{10}{12.0}{\rmdefault}{\mddefault}{\updefault}{\color[rgb]{0,0,0}$q_f$}%
}}}}
\end{picture}%
}

\vspace{3mm}

\noindent
We have taken the liberty to omit the state $q_e$ from $P_0$,
to keep the example small, and since the state $q_3$ is always reached with the stack $b_0$.

To determine the reachable transitions in $P_0$, the following nfa $N$ is constructed.

\vspace{4mm}

\centerline{\begin{picture}(0,0)%
\includegraphics{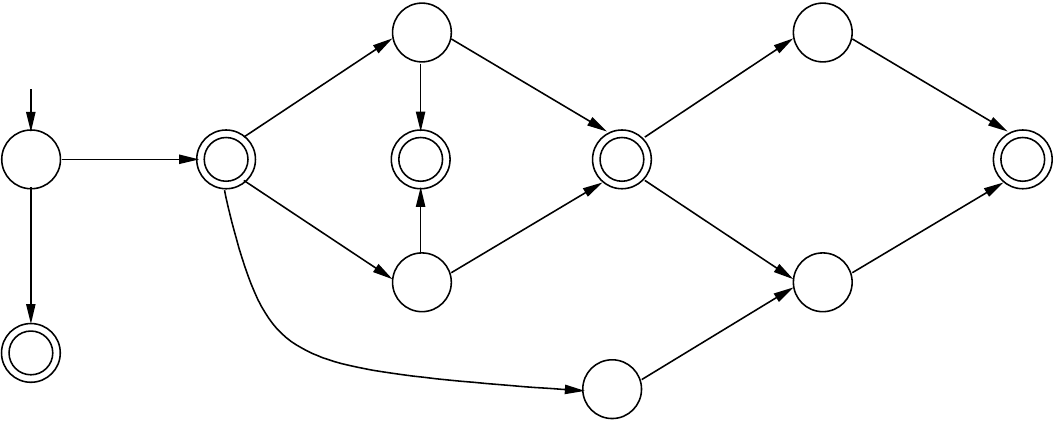}%
\end{picture}%
\setlength{\unitlength}{3947sp}%
\begingroup\makeatletter\ifx\SetFigFont\undefined%
\gdef\SetFigFont#1#2#3#4#5{%
  \reset@font\fontsize{#1}{#2pt}%
  \fontfamily{#3}\fontseries{#4}\fontshape{#5}%
  \selectfont}%
\fi\endgroup%
\begin{picture}(5059,2010)(967,-3338)
\put(2053,-2134){\makebox(0,0)[b]{\smash{{\SetFigFont{10}{12.0}{\rmdefault}{\mddefault}{\updefault}{\color[rgb]{0,0,0}$q_0$}%
}}}}
\put(2466,-3132){\makebox(0,0)[rb]{\smash{{\SetFigFont{10}{12.0}{\rmdefault}{\mddefault}{\updefault}{\color[rgb]{0,0,0}$\eps$}%
}}}}
\put(4347,-3068){\makebox(0,0)[lb]{\smash{{\SetFigFont{10}{12.0}{\rmdefault}{\mddefault}{\updefault}{\color[rgb]{0,0,0}$a$}%
}}}}
\put(3906,-3230){\makebox(0,0)[b]{\smash{{\SetFigFont{10}{12.0}{\rmdefault}{\mddefault}{\updefault}{\color[rgb]{0,0,0}$n_3$}%
}}}}
\put(4432,-1721){\makebox(0,0)[rb]{\smash{{\SetFigFont{10}{12.0}{\rmdefault}{\mddefault}{\updefault}{\color[rgb]{0,0,0}$\eps$}%
}}}}
\put(4917,-2723){\makebox(0,0)[b]{\smash{{\SetFigFont{10}{12.0}{\rmdefault}{\mddefault}{\updefault}{\color[rgb]{0,0,0}$n_4$}%
}}}}
\put(4432,-2555){\makebox(0,0)[rb]{\smash{{\SetFigFont{10}{12.0}{\rmdefault}{\mddefault}{\updefault}{\color[rgb]{0,0,0}$\eps$}%
}}}}
\put(5877,-2134){\makebox(0,0)[b]{\smash{{\SetFigFont{10}{12.0}{\rmdefault}{\mddefault}{\updefault}{\color[rgb]{0,0,0}$q_2$}%
}}}}
\put(5474,-1721){\makebox(0,0)[lb]{\smash{{\SetFigFont{10}{12.0}{\rmdefault}{\mddefault}{\updefault}{\color[rgb]{0,0,0}$c$}%
}}}}
\put(5474,-2555){\makebox(0,0)[lb]{\smash{{\SetFigFont{10}{12.0}{\rmdefault}{\mddefault}{\updefault}{\color[rgb]{0,0,0}$d$}%
}}}}
\put(4917,-1524){\makebox(0,0)[b]{\smash{{\SetFigFont{10}{12.0}{\rmdefault}{\mddefault}{\updefault}{\color[rgb]{0,0,0}$n_5$}%
}}}}
\put(2507,-1721){\makebox(0,0)[rb]{\smash{{\SetFigFont{10}{12.0}{\rmdefault}{\mddefault}{\updefault}{\color[rgb]{0,0,0}$\eps$}%
}}}}
\put(2993,-2723){\makebox(0,0)[b]{\smash{{\SetFigFont{10}{12.0}{\rmdefault}{\mddefault}{\updefault}{\color[rgb]{0,0,0}$n_2$}%
}}}}
\put(3051,-2481){\makebox(0,0)[lb]{\smash{{\SetFigFont{10}{12.0}{\rmdefault}{\mddefault}{\updefault}{\color[rgb]{0,0,0}$\eps$}%
}}}}
\put(2507,-2555){\makebox(0,0)[rb]{\smash{{\SetFigFont{10}{12.0}{\rmdefault}{\mddefault}{\updefault}{\color[rgb]{0,0,0}$\eps$}%
}}}}
\put(3953,-2134){\makebox(0,0)[b]{\smash{{\SetFigFont{10}{12.0}{\rmdefault}{\mddefault}{\updefault}{\color[rgb]{0,0,0}$q_1$}%
}}}}
\put(3550,-1721){\makebox(0,0)[lb]{\smash{{\SetFigFont{10}{12.0}{\rmdefault}{\mddefault}{\updefault}{\color[rgb]{0,0,0}$a$}%
}}}}
\put(3550,-2555){\makebox(0,0)[lb]{\smash{{\SetFigFont{10}{12.0}{\rmdefault}{\mddefault}{\updefault}{\color[rgb]{0,0,0}$b$}%
}}}}
\put(2993,-1524){\makebox(0,0)[b]{\smash{{\SetFigFont{10}{12.0}{\rmdefault}{\mddefault}{\updefault}{\color[rgb]{0,0,0}$n_1$}%
}}}}
\put(3051,-1785){\makebox(0,0)[lb]{\smash{{\SetFigFont{10}{12.0}{\rmdefault}{\mddefault}{\updefault}{\color[rgb]{0,0,0}$\eps$}%
}}}}
\put(2987,-2134){\makebox(0,0)[b]{\smash{{\SetFigFont{10}{12.0}{\rmdefault}{\mddefault}{\updefault}{\color[rgb]{0,0,0}$q_3$}%
}}}}
\put(1565,-2042){\makebox(0,0)[b]{\smash{{\SetFigFont{10}{12.0}{\rmdefault}{\mddefault}{\updefault}{\color[rgb]{0,0,0}$b_0$}%
}}}}
\put(1117,-2133){\makebox(0,0)[b]{\smash{{\SetFigFont{10}{12.0}{\rmdefault}{\mddefault}{\updefault}{\color[rgb]{0,0,0}$m_0$}%
}}}}
\put(1156,-2555){\makebox(0,0)[lb]{\smash{{\SetFigFont{10}{12.0}{\rmdefault}{\mddefault}{\updefault}{\color[rgb]{0,0,0}$\eps$}%
}}}}
\put(1116,-3063){\makebox(0,0)[b]{\smash{{\SetFigFont{10}{12.0}{\rmdefault}{\mddefault}{\updefault}{\color[rgb]{0,0,0}$q_f$}%
}}}}
\end{picture}%
}

\begin{itemize}
\item
Initially $N$ consists of $m_0\stackrel{b_0}{\rightarrow}q_0$.\vspace{1mm}
\item
First the paths $q_0\stackrel{\eps}{\rightarrow}n_1\stackrel{a}{\rightarrow}q_1$ and $q_0\stackrel{\eps}{\rightarrow}n_2\stackrel{b}{\rightarrow}q_1$
and $q_0\stackrel{\eps}{\rightarrow}n_3\stackrel{a}{\rightarrow}n_4\stackrel{d}{\rightarrow}q_2$ are added to $N$, by the transitions
$q_0\stackrel{\eps/a}{\rightarrow}q_1$ and $q_0\stackrel{\eps/b}{\rightarrow}q_1$ and $q_0\stackrel{\eps/da}{\rightarrow}q_2$, respectively, in $P_0$.
These transitions are deleted from $U_1$.\vspace{1mm}
\item
Next the paths $q_1\stackrel{\eps}{\rightarrow}n_5\stackrel{c}{\rightarrow}q_2$ and $q_1\stackrel{\eps}{\rightarrow}n_4$ are added
to $N$, by the transitions $q_1\stackrel{\eps/c}{\rightarrow}q_2$ and $q_1\stackrel{\eps/d}{\rightarrow}q_2$, respectively, in $P_0$,
which are deleted from $U_1$.\vspace{1mm}
\item
Next the transitions $n_1\stackrel{\eps}{\rightarrow}q_3$ and $n_2\stackrel{\eps}{\rightarrow}q_3$ are added to $N$,
by the transitions $q_2\!\stackrel{ca/\eps}{\rightarrow}\!q_3$ and $q_2\!\stackrel{db/\eps}{\rightarrow}\!q_3$, respectively, in $P_0$,
which are deleted from $U_1$.\vspace{1mm}
\item
Finally the transition $m_0\stackrel{\eps}{\rightarrow}q_f$ is added to $N$,
by the transition $q_3\stackrel{b_0/\eps}{\rightarrow}q_f$ in $P_0$,
which is deleted from $U_1$.
\end{itemize}
Since all transitions in $P_0$ are applied in the construction of $N$, they are all reachable. That is, at the end $U_1=\emptyset$,
and $P_1$ coincides with $P_0$.
\end{example}

\paragraph{Correctness proof}
The following two properties of $N$, which follow immediately from its construction, give insight into the structure of $N$.
In particular, Lemma \ref{lem:nfa1} implies that in $N$, the outgoing transitions of a final state always carry the label $\eps$,
while each non-final state has exactly one outgoing transition with a label from $\Gamma$.

\begin{lemma}
\label{lem:nfa-reachable}
For each state $x$ in $N$, there is a path $m_0\stackrel{\sigma}{\Rightarrow}x$ in $N$, for some $\sigma$.
\end{lemma}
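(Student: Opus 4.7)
The plan is to prove the statement by induction on the sequence of updates that the \emph{forward} procedure makes to $N$, where one update is a single execution of steps~1--5 for some transition $\theta = q \stackrel{\sigma/\tau}{\rightarrow} r$ of $P_0$. The base case concerns the initial $N$, which contains only $m_0$ and $q_0$: then $m_0 \stackrel{\eps}{\Rightarrow} m_0$ trivially, and $m_0 \stackrel{b_0}{\Rightarrow} q_0$ via the single initial transition.

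For the inductive step I would first observe that step~5 only adds $\eps$-transitions, so it can only enlarge, never shrink, the set of states reachable from $m_0$; in particular, it introduces no new states. All genuinely new states therefore come from step~4, via the sub-procedure 4.1--4.3. Inspecting that sub-procedure, any fresh non-final states it introduces form a single chain $n_1 \stackrel{a_1}{\rightarrow} n_2 \stackrel{a_2}{\rightarrow} \cdots \stackrel{a_j}{\rightarrow} z'$ installed by clause~4.3, where $z'$ is either an already existing state (reached through a preceding application of clause~4.2) or the target $r$ itself; in the boundary case $k = 0$ only $r$ can be newly added, with no fresh non-final states preceding it.

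The crux is then to show that, after step~5, the first new state on the chain --- $n_1$, or $r$ in the boundary case --- is reachable from $m_0$. This relies on the guard in step~3: if the procedure reaches step~4, then $S_{q,\sigma} \neq \emptyset$, and by construction (step~2) every element of $S_{q,\sigma}$ is an existing state of $N$. The induction hypothesis provides, for any $x \in S_{q,\sigma}$, a path $m_0 \stackrel{\sigma_x}{\Rightarrow} x$. Step~5 then installs the $\eps$-transition $x \stackrel{\eps}{\rightarrow} y$, where $y$ is the first state of the newly established path $\stackrel{\tau^R}{\leadsto} r$; concatenating $m_0 \stackrel{\sigma_x}{\Rightarrow} x \stackrel{\eps}{\rightarrow} y$ with the fresh chain $y = n_1 \stackrel{a_1}{\rightarrow} n_2 \cdots$ propagates reachability to $n_2, \ldots, n_j$ and, when applicable, to $r$. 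If instead $y$ is an existing state, no new states were created and there is nothing further to prove.

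The main obstacle I anticipate is essentially bookkeeping: enumerating which of $n_1, \ldots, n_j, r$ are actually new in a given iteration and confirming that the ``bridge'' state $y$ chosen in step~5 really is the entry point of the fresh chain in every case. Once that case analysis is in hand, the induction closes without further difficulty.
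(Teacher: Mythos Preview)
Your proposal is correct and follows essentially the same approach as the paper: induction on the construction of $N$, using that whenever new states are introduced in step~4, step~5 simultaneously installs an $\eps$-transition from an already-present state $x\in S_{q,\sigma}$ (reachable from $m_0$ by the induction hypothesis) to the head $y$ of the newly established path, whence reachability propagates along that path to all fresh states including $r$. The paper's own proof is considerably terser---it dispatches this lemma jointly with Lemma~\ref{lem:nfa1} in a short paragraph and omits the bookkeeping case analysis you carefully spell out---but the underlying argument is the same.
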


\begin{lemma}
\label{lem:nfa1}
For each state $x$ in $N$ there is exactly one path $x\stackrel{\sigma}{\leadsto}q$ in $N$, for some $\sigma$, $q$.
\end{lemma}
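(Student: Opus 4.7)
The plan is to first extract a structural invariant from the construction of $N$, and then deduce both existence and uniqueness of the path from that invariant.

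The invariant I would establish first is: every non-final state of $N$ has exactly one outgoing transition whose label lies in $\Gamma$, and every final state has none. This should follow by a direct inspection of the construction: non-$\eps$ transitions enter $N$ only via the initial edge $m_0\stackrel{b_0}{\rightarrow}q_0$ and via step 4.3 of the \emph{establish} subroutine, which always sources its new transition at a freshly minted non-final state. Step 4.2 reuses an existing non-$\eps$ transition rather than duplicating one, and step 5 contributes only $\eps$-transitions. Final states, whether inherited from $P_0$ or promoted in step 4, are never created as the source of a non-$\eps$ transition and never acquire one later.

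Given the invariant, uniqueness is immediate by induction on the length of an $\eps$-free path starting at $x$. If $x$ is final, the invariant forbids any non-$\eps$ successor, so the only $\eps$-free path from $x$ is the empty one and its endpoint is $x$ itself. If $x$ is non-final, the sole outgoing non-$\eps$ transition from $x$ determines the first step uniquely; applying the inductive hypothesis at the successor forces the remainder of the path and its endpoint. For existence I would induct on the order in which states are introduced into $N$. The state $q_0$ is final and satisfies the claim trivially; $m_0$ reaches $q_0$ via $b_0$. Any later state is either a pda-state promoted to final in step 4 (witnessed by the empty path) or a non-final state $n_i$ in a chain $n_1\stackrel{a_1}{\rightarrow}\cdots\stackrel{a_k}{\rightarrow}z$ added in step 4.3, in which case the endpoint $z$ pre-exists and by the inductive hypothesis already admits an $\eps$-free continuation to a final state; concatenating the tail of the chain with this continuation yields the required path.

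The main delicacy will be convincing oneself that the inductive order is well-founded, in particular that the endpoint $z$ of a chain built by step 4.3 is genuinely older than the $n_i$ just created rather than being one of them. I would handle this by noting that the chain added in a single invocation of step 4.3 is linear and acyclic, and that $z$ was fixed by the recursive descent through step 4.2 (or is the original target $r$ of step 4) before any of $n_1,\ldots,n_k$ came into being. With that observation, the induction closes cleanly and, together with Lemma~\ref{lem:nfa-reachable}, gives the stated structural property of $N$.
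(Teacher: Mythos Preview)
Your argument is correct and rests on the same inductive engine as the paper's proof---induction on the construction of $N$---but you factor it differently. The paper proves Lemmas~\ref{lem:nfa-reachable} and~\ref{lem:nfa1} simultaneously by a single terse induction: each application of the \emph{forward} procedure adds only a linear chain $y\stackrel{\tau^R}{\leadsto}r$ of fresh non-final states ending at a (possibly new) final state, plus some $\eps$-transitions into its head, so the ``unique $\eps$-free path to a final state'' property is preserved. You instead isolate the invariant ``every non-final state has exactly one outgoing $\Gamma$-labelled transition, every final state has none'' as an intermediate lemma, prove that by the same construction-induction, and then read off existence and uniqueness of the $\leadsto$-path from it. The paper actually states your invariant too, but as a \emph{consequence} of Lemma~\ref{lem:nfa1} rather than as its proof device; your direction of implication is the reverse. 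Both routes are equally short; yours makes the structural reason more explicit, while the paper's keeps the two reachability lemmas bundled in one induction.
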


\begin{proof}
We prove both lemmas in one go, by induction on the construction of $N$.

Initially they hold trivially, because then $N$ only consists of the transition $m_0\stackrel{b_0}{\rightarrow}q_0$.

When a path $y\stackrel{\tau}{\leadsto}q$ is added to $N$, then also a transition $x\stackrel{\eps}{\rightarrow}y$
is added, where $x$ was already present in $N$. Since by induction $m_0\stackrel{\rho}{\Rightarrow}x$ for some $\rho$,
clearly Lemma \ref{lem:nfa-reachable} still holds in the extended nfa $N$.
Moreover, since only (part of) the path $y\stackrel{\tau}{\leadsto}q$ is added to $N$, together with some
$\eps$-transitions to the first state in this path, clearly Lemma \ref{lem:nfa1} still holds in the extended nfa $N$.
\qed
\end{proof}

\noindent
The following lemma and proposition are corner stones in the correctness proof.

\begin{lemma}
\label{lem:strengthening}
If there are paths $x\stackrel{\sigma^R}{\leadsto}q$ and $x\stackrel{\tau^R}{\Rightarrow}r$ in $N$, then there is a run $(q,\sigma)\rightarrow^*(r,\tau)$ of $P_1$.
\end{lemma}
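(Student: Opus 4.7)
My plan is to prove this by induction on a lexicographic measure attached to the path $\pi = x \stackrel{\tau^R}{\Longrightarrow} r$. Assign to each $\eps$-transition $e$ of $N$ its \emph{time} $\mathrm{time}(e)$, namely the sequential position at which $e$ was added during the forward procedure. For a path $\pi$, let $\mathrm{mt}(\pi)$ be the maximum time of any $\eps$-transition in $\pi$ (and $0$ if $\pi$ uses none), and let $\mathrm{cnt}(\pi)$ be the number of $\eps$-transitions of $\pi$ achieving this maximum. The induction is on the pair $(\mathrm{mt}(\pi),\mathrm{cnt}(\pi))$ in lex order.

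In the base case $\mathrm{mt}(\pi)=0$, the path is $\eps$-free, so $x \stackrel{\tau^R}{\leadsto} r$. Since $q$ and $r$ are both final, Lemma \ref{lem:nfa1} (uniqueness of the $\eps$-free path out of $x$ to a final state) forces $\sigma = \tau$ and $q = r$, so the empty run suffices.

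For the inductive step I would pick the first $\eps$-transition $e = x' \stackrel{\eps}{\rightarrow} y'$ in $\pi$ with time $t=\mathrm{mt}(\pi)$, and split $\pi$ as $x\stackrel{\tau_1^R}{\Longrightarrow}x'\stackrel{\eps}{\rightarrow}y'\stackrel{\tau_2^R}{\Longrightarrow}r$ with $\tau = \tau_2\tau_1$; note that the prefix up to $x'$ contains only $\eps$-transitions of time $<t$. The transition $e$ was added while processing some pda transition $\theta = q'\stackrel{\sigma'/\tau'}{\rightarrow}r'$ of $P_0$, which means: at that moment $N$ already contained a path $x'\stackrel{\sigma'^R}{\Longrightarrow}q'$ (its $\eps$-transitions therefore all have time $<t$); the $\eps$-free path $y'\stackrel{\tau'^R}{\leadsto}r'$ was set up in step~4; and $\theta$ was deleted from $U_1$, so $\theta$ is a transition of $P_1$ and induces the move $(q',\sigma')\to(r',\tau')$. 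I then invoke the IH twice: first on $x\stackrel{\sigma^R}{\leadsto}q$ and the concatenation $x\stackrel{\tau_1^R}{\Longrightarrow}x'\stackrel{\sigma'^R}{\Longrightarrow}q'$ (all $\eps$-transitions of the latter have time $<t$, so $\mathrm{mt}$ strictly drops) to obtain $(q,\sigma)\to^*(q',\sigma'\tau_1)$; and second on $y'\stackrel{\tau'^R}{\leadsto}r'$ and $y'\stackrel{\tau_2^R}{\Longrightarrow}r$ (where $\mathrm{mt}$ is $\le t$, and if equal then $\mathrm{cnt}$ drops by one because $e$ is gone) to obtain $(r',\tau')\to^*(r,\tau_2)$. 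Chaining these with the move $\theta$ and appending $\tau_1$ to the stack in the second run yields $(q,\sigma)\to^*(q',\sigma'\tau_1)\to(r',\tau'\tau_1)\to^*(r,\tau_2\tau_1)=(r,\tau)$.

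The hard part is choosing the induction measure. A naive induction on path length or on the total number of $\eps$-transitions in $\pi$ does not work, since the concatenated path used in the first IH call may contain $\eps$-transitions absent from $\pi$ (namely those inside $x'\stackrel{\sigma'^R}{\Longrightarrow}q'$). The crucial observation is that each such $\eps$-transition was already present in $N$ before $e$ was added, so its time is strictly smaller than that of $e$; this is exactly what makes $(\mathrm{mt},\mathrm{cnt})$ a well-founded measure that strictly decreases in both recursive calls.
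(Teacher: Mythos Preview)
Your proof is correct and follows essentially the same approach as the paper: both use the creation order of $\eps$-transitions as the well-founded measure, exploiting that the witnessing path $x'\stackrel{\sigma'^R}{\Longrightarrow}q'$ for an $\eps$-transition $e$ contains only $\eps$-transitions created strictly before $e$. The paper's ordering (max sequence number decreases, or strict subsequence) and its case split on the first transition of $\pi$ differ only cosmetically from your lexicographic $(\mathrm{mt},\mathrm{cnt})$ measure and your jump to the first max-time $\eps$-transition.
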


\begin{proof}
The lemma is proved by induction on the path $x\stackrel{\tau^R}{\Rightarrow}r$. A well-founded partial ordering on paths in $N$ is defined as follows.
Suppose that during the construction of $N$, each $\eps$-transition is at its creation provided with a sequence number, being one
higher than the previously created $\eps$-transition (the first created $\eps$-transition gets sequence number $0$).
Now $y_1\stackrel{\rho_1}{\Rightarrow}s_1$ is defined to be smaller than $y_2\stackrel{\rho_2}{\Rightarrow}s_2$ if:

\vspace*{-2mm}
\begin{itemize}
\item[(i)]
either $y_2\stackrel{\rho_2}{\Rightarrow}s_2$ contains an $\eps$-transition with a higher sequence number than any of the
$\eps$-transitions in $y_1\stackrel{\rho_1}{\Rightarrow}s_1$;
\item[(ii)]
or $y_2\stackrel{\rho_2}{\Rightarrow}s_2$ contains $y_1\stackrel{\rho_1}{\Rightarrow}s_1$ as a strict subsequence.
\end{itemize}

\vspace*{-2mm}

\noindent
In the base case of the induction, the path $x\stackrel{\tau^R}{\Rightarrow}r$ consists of zero transitions, so that $\tau=\eps$ and $x=r$.
By Lemma \ref{lem:nfa1}, the path $r\stackrel{\sigma^R}{\leadsto}q$ in $N$ implies $\sigma=\eps$ and $r=q$. And trivially there is a run
$(q,\eps)\rightarrow^*(q,\eps)$ of $P_1$.

In the inductive case, the path $x\stackrel{\tau^R}{\Rightarrow}r$ consists of one or more transitions.
We distinguish two cases, depending on whether the path $x\stackrel{\tau^R}{\Rightarrow}r$ starts with an $\eps$-transition.

\vspace{1mm}

\noindent
{\sc Case 1}: The path $x\stackrel{\tau^R}{\Rightarrow}r$ is of the form $x\stackrel{a}{\rightarrow}y\stackrel{\tau'^R}{\Rightarrow}r$, with $\tau=\tau' a$.
By Lemma \ref{lem:nfa1}, $x$ is non-final, and $x\stackrel{\sigma^R}{\leadsto}q$ is of the form $x\stackrel{a}{\rightarrow}y\stackrel{\sigma'^R}{\leadsto}q$,
with $\sigma=\sigma' a$.
By (ii), $y\stackrel{\tau'^R}{\Rightarrow}r$ is smaller than $x\stackrel{\tau^R}{\Rightarrow}r$.
Since there are paths $y\stackrel{\sigma'^R}{\leadsto}q$ and $y\stackrel{\tau'^R}{\Rightarrow}r$ in $N$,
by induction, there is a run $(q,\sigma')\rightarrow^*(r,\tau')$ of $P_1$.
So there is a run $(q,\sigma'a)\rightarrow^*(r,\tau'a)$ of $P_1$.

\vspace{1mm}

\noindent
{\sc Case 2}: The path $x\stackrel{\tau^R}{\Rightarrow}r$ is of the form $x\stackrel{\eps}{\rightarrow}y\stackrel{\tau^R}{\Rightarrow}r$.
Suppose the transition $x\stackrel{\eps}{\rightarrow}y$ was created in $N$ due to a transition $s\stackrel{\upsilon/\zeta}{\rightarrow}t$ in $P_1$.
Then there must be paths $x\stackrel{\upsilon^R}{\Rightarrow}s$ and $y\stackrel{\zeta^R}{\leadsto}t$ in $N$.
By (i), $x\stackrel{\upsilon^R}{\Rightarrow}s$ is smaller than $x\stackrel{\tau^R}{\Rightarrow}r$, because
$x\stackrel{\eps}{\rightarrow}y$ has a higher sequence number than any of the $\eps$-transitions in the path $x\stackrel{\upsilon^R}{\Rightarrow}s$.
Since $x\stackrel{\sigma^R}{\leadsto}q$ and $x\stackrel{\upsilon^R}{\Rightarrow}s$,
by induction, there is a run $(q,\sigma)\rightarrow^*(s,\upsilon)$ of $P_1$.
The transition $s\stackrel{\upsilon/\zeta}{\rightarrow}t$ in $P_1$ gives rise to the move $(s,\upsilon)\rightarrow(t,\zeta)$.
By (ii), $y\stackrel{\tau^R}{\Rightarrow}r$ is smaller than $x\stackrel{\tau^R}{\Rightarrow}r$.
Since $y\stackrel{\zeta^R}{\leadsto}t$ and $y\stackrel{\tau^R}{\Rightarrow}r$,
by induction, there is a run $(t,\zeta)\rightarrow^*(r,\tau)$ of $P_1$.
Concluding, there is a run $(q,\sigma)\rightarrow^*(r,\tau)$ of $P_1$.
\qed
\end{proof}

\begin{proposition}
\label{prop:nfa1}
There is a path $m_0\stackrel{\sigma^R}{\Rightarrow}r$ in $N$ if, and only if, $(r,\sigma)$ is reachable in $P_0$.
\end{proposition}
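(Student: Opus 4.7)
The plan is to prove the two implications of the biconditional separately.

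For the direction $m_0\stackrel{\sigma^R}{\Rightarrow}r \Rightarrow (r,\sigma)$ reachable in $P_0$, I would invoke Lemma~\ref{lem:strengthening}. The initial transition $m_0\stackrel{b_0}{\rightarrow}q_0$ is present in $N$ throughout the construction, so $m_0\stackrel{b_0}{\leadsto}q_0$. Instantiating the lemma with $x=m_0$, $q=q_0$, stack $b_0$, and the hypothesised path $m_0\stackrel{\sigma^R}{\Rightarrow}r$ yields a run $(q_0,b_0)\rightarrow^*(r,\sigma)$ of $P_1$. Since every transition of $P_1$ is a transition of $P_0$ and $(q_0,b_0)$ is the initial configuration of $P_0$, the configuration $(r,\sigma)$ is reachable in $P_0$.

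For the converse, I would induct on the length of a run $(q_0,b_0)\rightarrow^*(r,\sigma)$ in $P_0$. The base case (zero moves) is immediate: $(r,\sigma)=(q_0,b_0)$ and the initial transition $m_0\stackrel{b_0}{\rightarrow}q_0$ of $N$ supplies the required path. In the inductive step, the run ends with $(q,\sigma\rho)\rightarrow(r,\tau\rho)$ via some transition $\theta=q\stackrel{\sigma/\tau}{\rightarrow}r$ of $P_0$. The induction hypothesis gives a path $m_0\stackrel{\rho^R\sigma^R}{\Rightarrow}q$ in $N$. I would decompose this path to produce a state $n$ with $m_0\stackrel{\rho^R}{\Rightarrow}n$ and $n\in S_{q,\sigma}$: if $\sigma=\eps$, then $S_{q,\sigma}=\{q\}$ and I take $n=q$; otherwise write $\sigma=\sigma'a$, and let $n$ be the source of the $(|\rho|+1)$-st non-$\eps$ transition along the path, which is necessarily labelled $a$ and is followed by a sub-path $y'\stackrel{\sigma'^R}{\Rightarrow}q$, witnessing $n\in S_{q,\sigma}$.

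I would then exploit termination of the \emph{forward} procedure: since it halts only after a complete pass causes no change, a hypothetical further iteration over $\theta$, using the terminal $N$ where $n\in S_{q,\sigma}$, must leave $N$ unchanged. This forces the established path $y\stackrel{\tau^R}{\leadsto}r$ already to exist in $N$ (with some first state $y$) and the $\eps$-transition $n\stackrel{\eps}{\rightarrow}y$ added in step~5 already to be present. Concatenating $m_0\stackrel{\rho^R}{\Rightarrow}n\stackrel{\eps}{\rightarrow}y\stackrel{\tau^R}{\leadsto}r$ produces the desired path $m_0\stackrel{(\tau\rho)^R}{\Rightarrow}r$ in $N$.

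The main obstacle is the decomposition step in the inductive case, where I must argue that the point in the path just before the $a$-labelled transition really does lie in the set $S_{q,\sigma}$ computed in the \emph{final} iteration (not just some intermediate one). Lemma~\ref{lem:nfa1} — which pins down the structure of outgoing non-$\eps$ transitions — is what makes the decomposition unambiguous, and the termination-as-fixed-point argument is what lets me transport the existence of $n$ at the final stage back into actual $\eps$-transitions in $N$.
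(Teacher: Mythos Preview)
Your proposal is correct and follows essentially the same route as the paper: the $(\Rightarrow)$ direction is obtained by instantiating Lemma~\ref{lem:strengthening} at $x=m_0$ with the seed path $m_0\stackrel{b_0}{\leadsto}q_0$, and the $(\Leftarrow)$ direction is an induction on run length, splitting the inductively obtained $\Rightarrow$-path so that its tail does not start with an $\eps$-transition and then invoking the saturation of the \emph{forward} procedure to recover the $\eps$-edge and the $\stackrel{\tau^R}{\leadsto}r$ path. Two cosmetic remarks: your appeal to Lemma~\ref{lem:nfa1} for the decomposition is unnecessary (you only need existence of a suitable split, not uniqueness; the paper does not invoke it here), and you have overloaded the name $\sigma$ in the inductive step---it would read more cleanly to use fresh letters for the pop/push strings of the last transition, as the paper does.
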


\begin{proof}
($\Rightarrow$)
The transition $m_0\stackrel{b_0}{\rightarrow}q_0$ is in $N$, and by assumption there is a path $m_0\stackrel{\sigma^R}{\Rightarrow}r$ in $N$.
So by Lemma \ref{lem:strengthening}, there is a run $(q_0,b_0)\rightarrow^*(r,\sigma)$ of $P_1$, and so of $P_0$.

\vspace{2mm}

\noindent
($\Leftarrow$) By induction on the number of moves in a run $(q_0,b_0)\rightarrow^*(r,\sigma)$ of $P_0$.
In the base case, no transition is applied, so $r=q_0$ and $\sigma=b_0$.
The transition $m_0\stackrel{b_0}{\rightarrow}q_0$ is in $N$.

In the inductive case, suppose $q\stackrel{\tau/\upsilon}{\rightarrow}r$ is the last transition applied in the run $(q_0,b_0)\rightarrow^*(r,\sigma)$.
Then $\sigma=\upsilon\rho$ for some $\rho$, and there is a run $(q_0,b_0)\rightarrow^*(q,\tau\rho)$ of $P_0$. Since this run
takes one move less than the one to $(r,\sigma)$, by induction there is a path $m_0\!\!\stackrel{(\tau\rho)^R}{\Rightarrow}\!\!q$ in $N$.
This path splits into $m_0\stackrel{\rho^R}{\Rightarrow}x\stackrel{\tau^R}{\Rightarrow}q$ in $N$, where we choose $x$ such that
$x\stackrel{\tau^R}{\Rightarrow}q$ does not start with an $\eps$-transition.
In view of the transition $q\stackrel{\tau/\upsilon}{\rightarrow}r$ in $P_0$ and the path $x\stackrel{\tau^R}{\Rightarrow}q$ in $N$,
during the construction of $N$, a path $y\stackrel{\upsilon^R}{\leadsto}r$ was created, together with
the transition $x\stackrel{\eps}{\rightarrow}y$. Concluding, there is a path $m_0\stackrel{\rho^R}{\Rightarrow}x\stackrel{\eps}{\rightarrow}y\stackrel{\upsilon^R}{\leadsto}r$,
so $m_0\stackrel{\sigma^R}{\Rightarrow}r$, in $N$.
\qed
\end{proof}

\begin{theorem}
\label{thm:nfa1}
The returned set $U_1$ consists of the unreachable transitions in $P_0$.
\end{theorem}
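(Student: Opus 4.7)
The plan is to establish the two inclusions between the returned set $U_1$ and the set of transitions that are unreachable in $P_0$, using Proposition \ref{prop:nfa1} applied to the final nfa $N$ as the central tool in both directions.

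I would start with the easier direction: if a transition $\theta=q\stackrel{\sigma/\tau}{\rightarrow}r$ has been removed from $U_1$, then $\theta$ is reachable. Removal of $\theta$ means that in some iteration the checks of steps 1 and 3 passed, witnessing a state $x\in S_{q,\sigma}$. Since $N$ only grows across iterations, $x$ still witnesses $x\in S_{q,\sigma}$ in the final $N$, and by the definition of $S_{q,\sigma}$ there is a path $x\stackrel{\sigma^R}{\Rightarrow}q$ in $N$ (empty if $\sigma=\eps$ and $x=q$). Lemma \ref{lem:nfa-reachable} provides a path $m_0\stackrel{\rho^R}{\Rightarrow}x$ for some $\rho\in\Gamma^*$. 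Concatenation gives $m_0\stackrel{(\sigma\rho)^R}{\Rightarrow}q$, so by Proposition \ref{prop:nfa1} the configuration $(q,\sigma\rho)$ is reachable in $P_0$, and applying $\theta$ there shows $\theta$ is reachable.

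For the converse direction, assume $\theta$ is reachable, so $(q_0,b_0)\rightarrow^*(q,\sigma\rho)$ in $P_0$ for some $\rho$. By Proposition \ref{prop:nfa1}, the final $N$ contains a path $m_0\stackrel{(\sigma\rho)^R}{\Rightarrow}q$. The central step is to split this path as $m_0\stackrel{\rho^R}{\Rightarrow}x\stackrel{\sigma^R}{\Rightarrow}q$ where, if $\sigma\neq\eps$, the second segment starts with a non-$\eps$ transition: I would take $x$ to be the source of the first of the last $|\sigma|$ non-$\eps$ transitions along the path (and $x=q$ when $\sigma=\eps$). By construction this $x$ belongs to $S_{q,\sigma}$ as computed on the final $N$, so in the last iteration of the forward procedure the checks in steps 1--3 pass for $\theta$. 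If $\theta$ was still in $U_1$ at the start of that iteration, step 4 removes it; otherwise it had already been removed. In either case $\theta\notin U_1$ at termination.

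The main obstacle is the splitting step: I must verify that cutting the path at the boundary between its suffix labelled by $\sigma^R$ and the preceding portion produces exactly the path shape required by the definition of $S_{q,\sigma}$, namely that the first transition out of $x$ carries a label in $\Gamma$ rather than $\eps$. A secondary subtlety is that the forward procedure's termination criterion concerns only changes to $N$: in the last iteration no new nodes or transitions are added to $N$, yet $U_1$ may still shrink, which is precisely what legitimises the ``last iteration'' argument in the converse direction.
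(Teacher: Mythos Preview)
Your proposal is correct and follows essentially the same approach as the paper: both directions hinge on Proposition~\ref{prop:nfa1}, with Lemma~\ref{lem:nfa-reachable} supplying the prefix $m_0\stackrel{\rho^R}{\Rightarrow}x$ in the first direction, and the splitting of $m_0\stackrel{(\sigma\rho)^R}{\Rightarrow}q$ at a non-$\eps$ edge to land in $S_{q,\sigma}$ in the second. Your explicit monotonicity and last-iteration remarks make precise what the paper leaves implicit when it simply asserts that ``a path $y\stackrel{\tau^R}{\leadsto}r$ and a transition $x\stackrel{\eps}{\rightarrow}y$ were added to $N$, and as a result $\theta$ was deleted from $U_1$''.
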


\begin{proof}
Suppose $\theta=q\stackrel{\sigma/\tau}{\rightarrow}r$ in $P_0$ is not in $U_1$. Then during the construction of $N$,
$\theta$ was used in the creation of a path $y\stackrel{\tau^R}{\leadsto}r$, together with one or more transitions
$x\stackrel{\eps}{\rightarrow}y$. We choose one such $x$. The construction requires that there is a path $x\stackrel{\sigma^R}{\Rightarrow}q$
in $N$. And by Lemma \ref{lem:nfa-reachable}, $m_0\stackrel{\upsilon^R}{\Rightarrow}x$ for some $\upsilon$. So by Proposition \ref{prop:nfa1},
$(q,\sigma\upsilon)$ is reachable in $P_0$. In this configuration, $\theta$ can be applied, to reach $(r,\tau\upsilon)$.
So $\theta$ is reachable in $P_0$.

Vice versa, suppose $\theta$ is reachable in $P_0$. Then a configuration $(q,\sigma\rho)$ is reachable in $P_0$, for some $\rho$.
So by Proposition \ref{prop:nfa1} there is a path $m_0\!\!\stackrel{(\sigma\rho)^R}{\Rightarrow}\!\!q$ in $N$.
This path splits into $m_0\stackrel{\rho^R}{\Rightarrow}x\stackrel{\sigma^R}{\Rightarrow}q$ in $N$, where we choose $x$ such that
$x\stackrel{\sigma^R}{\Rightarrow}q$ does not start with an $\eps$-transition.
In view of $\theta$, a path $y\stackrel{\tau^R}{\leadsto}r$ and a transition $x\stackrel{\eps}{\rightarrow}y$ were added to $N$.
And as a result, $\theta$ was deleted from $U_1$.
\qed
\end{proof}

\paragraph{Complexity analysis}
Let $Q$ be the number of states and $T$ the number of transitions in the pda $P_0$.
In the analysis of the worst-case time complexity of the algorithm we assume that the number of elements popped from and
pushed onto the stack in one transition, as well as the size of the stack alphabet, are bounded by some constant.
Then the nfa $N$ contains at most $O(Q)$ states.

Building $N$ takes at most $O(Q^4T)$: During a run of the {\em forward} procedure over the transitions in $P_0$,
at most $T$ times (once for each transition in $P_0$), in step 2 backward scans over the $\eps$-transitions are performed,
which take at most $O(Q^2)$ (because there are at most $O(Q^2)$ $\eps$-transitions);
and there are at most $O(Q^2)$ runs of the {\em forward} procedure over the transitions in $P_0$ (because $N$ contains at most $O(Q^2)$ transitions).

\subsection{Detecting which transitions can lead to the final state}
\label{sec:backward}

If the transition $m_0\stackrel{\eps}{\rightarrow}q_f$ is not in $N$, then by Proposition \ref{prop:nfa1} the language accepted by $P_1$
is empty. So then all transitions in $P_1$ are reported as useless.

If the transition $m_0\stackrel{\eps}{\rightarrow}q_f$ is in $N$, then
the set $U_2$ of useless transitions in $P_1$ is constructed by running the following {\em backward} procedure
over $\eps$-transitions in $N$ that are in a set $E\backslash F$; at the start of such a run an $\eps$-transition from $E\backslash F$ is
copied to the set $F$, while on the other hand during the run $\eps$-transitions from $N$ may be added to $E$.

Initially $U_2$, as an overapproximation, contains all transitions in $P_1$, $E=\{m_0\stackrel{\eps}{\rightarrow}q_f\}$
and $F=\emptyset$. We recall from step 2 of the {\em forward} procedure that the set $S_{q,\sigma}$ equals $\{q\}$ if $\sigma=\eps$,
or the states $n$ for which there exists a path $n\stackrel{a}{\rightarrow}y\stackrel{\sigma'^R}{\Longrightarrow}q$ in $N$ if $\sigma=\sigma'a$.
These sets have already been computed in (the last run of) the {\em forward} procedure.

\vspace{4mm}

\noindent
Procedure {\em backward}: While $E\backslash F\neq\emptyset$ do:
\begin{enumerate}
\item[1.~~]
Pick an $x\stackrel{\eps}{\rightarrow}y\in E\backslash F$ and add it to $F$.\vspace{1mm}
\item[2.~~]
Find the path $y\stackrel{\tau^R}{\leadsto}r$ in $N$ (for some $\tau$, $r$); since $r$ denotes a final state, according to Lemma \ref{lem:nfa1}, exactly one such path exists.\vspace{1mm}
\item[3.~~]
For each transition $\theta=q\stackrel{\sigma/\tau}{\rightarrow}r$ in $P_1$ (for any $q$, $\sigma$) do:\vspace{1mm}
\item[3.1]
If $x\not\in S_{q,\sigma}$, then stop this iteration step (i.e., return to step 3).\vspace{1mm}
\item[3.2]
If $\theta\in U_2$, then delete $\theta$ from $U_2$.\vspace{1mm}
\item[3.3]
If $\sigma=\sigma' a$ (i.e., $\sigma\neq\eps$), then add to $E$ the $\eps$-transitions that occur in any path
$x\stackrel{a}{\rightarrow}z\stackrel{\sigma'^R}{\Rightarrow}q$ in $N$.
\end{enumerate}

\vspace{1mm}

\noindent
At the end, return the set $U_2$ of useless transitions in $P_1$. The transitions in $U_1\cup U_2$ that stem from the original pda $P$
(i.e., not those from the preprocessing step in which $q_e$ and $q_f$ were added) are the useless transitions in $P$, so can be culled without
changing the associated language.

The idea behind the {\em backward} procedure is that a transition $x\stackrel{\eps}{\rightarrow}y$ in $N$ is added to $E$ when
we are certain that, for some $\rho$, $\upsilon$ and $s$, there is a path
$m_0\stackrel{\rho^R}{\Rightarrow}x\stackrel{\eps}{\rightarrow}y\stackrel{\upsilon^R}{\Rightarrow}s$ in $N$
and a run $(s,\upsilon\rho)\rightarrow^*(q_f,\eps)$ of $P_1$.
Each transition $x\stackrel{\eps}{\rightarrow}y$ in $E$ may in turn give rise to adding $\eps$-transitions in $N$ to $E$,
and removing transitions from $U_2$. Namely, by Lemma \ref{lem:nfa1} there is exactly one path $y\stackrel{\tau^R}{\leadsto}r$ in $N$ (for some $\tau$, $r$).
For each transition $\theta=q\stackrel{\sigma/\tau}{\rightarrow}r$ (for any $q$, $\sigma$) in $P_1$,
all $\eps$-transitions in any path $x\stackrel{\sigma^R}{\Rightarrow}q$ in $N$
can be added to $E$. The reason is that there is a path $m_0\stackrel{\rho^R}{\Rightarrow}x\stackrel{\sigma^R}{\Rightarrow}q$ in $N$,
as well as a run $(q,\sigma\rho)\rightarrow^*(q_f,\eps)$ of $P_1$: The transition $\theta$ gives rise to the move $(q,\sigma\rho)\rightarrow(r,\tau\rho)$;
in view of the paths $y\stackrel{\tau^R}{\leadsto}r$ and $y\stackrel{\upsilon^R}{\Rightarrow}s$ in $N$,
by Lemma \ref{lem:strengthening} there is a run $(r,\tau\rho)\rightarrow^*(s,\upsilon\rho)$ of $P_1$; and
by assumption there is a run $(s,\upsilon\rho)\rightarrow^*(q_f,\eps)$ of $P_1$. So if there is a path $x\stackrel{\sigma^R}{\Rightarrow}q$ in $N$,
then $\theta$ is useful: In view of the path $m_0\stackrel{\rho^R}{\Rightarrow}x\stackrel{\sigma^R}{\Rightarrow}q$ in $N$, by Proposition \ref{prop:nfa1},
the configuration $(q,\sigma\rho)$ is reachable in $P_0$; and we argued there is a run $(q,\sigma\rho)\rightarrow^*(q_f,\eps)$ of $P_1$,
which starts with an application of $\theta$. Hence $\theta$ can be removed from $U_2$.
To try and avoid adding the same $\eps$-transition to $E$ more than once,
we only consider those paths $x\stackrel{\sigma^R}{\Rightarrow}q$ in $N$ that do not start with an $\eps$-transition.

The {\em backward} procedure is executed only a finite number of times, as it is performed at most once for each $\eps$-transition in $N$.
The set $U_2$ returned at the end contains exactly the useless transitions in $P_1$.
The corresponding theorem is presented at the end of this section, together with two propositions needed in its proof.

\begin{example}
\label{exa:nfa2}
We perform the {\em backward} procedure on the nfa $N$ from Example \ref{exa:nfa1}.
\begin{itemize}
\item
Initially $E=\{m_0\stackrel{\eps}{\rightarrow}q_f\}$ and $F=\emptyset$.\vspace{2mm}
\item
$m_0\stackrel{\eps}{\rightarrow}q_f$ is added to $F$; then $\tau=\eps$ and $r=q_f$.
Since $S_{q_3,b_0}=\{m_0\}$, the transition $q_3\stackrel{b_0/\eps}{\rightarrow}q_f$ is deleted from $U_2$,
and the $\eps$-transitions in paths $m_0\stackrel{b_0}{\rightarrow}z\stackrel{\eps}{\Rightarrow}q_3$ in $N$ are added to $E$:
$q_0\stackrel{\eps}{\rightarrow}n_1\stackrel{\eps}{\rightarrow}q_3$ and $q_0\stackrel{\eps}{\rightarrow}n_2\stackrel{\eps}{\rightarrow}q_3$.\vspace{2mm}
\item
$q_0\stackrel{\eps}{\rightarrow}n_1$ is added to $F$; then $\tau=a$ and $r=q_1$. Since $S_{q_0,\eps}=\{q_0\}$, the transition $q_0\stackrel{\eps/a}{\rightarrow}q_1$ is deleted from $U_2$.\vspace{2mm}
\item
$q_0\stackrel{\eps}{\rightarrow}n_2$ is added to $F$; then $\tau=b$ and $r=q_1$. Since $S_{q_0,\eps}=\{q_0\}$, the transition $q_0\stackrel{\eps/b}{\rightarrow}q_1$ is deleted from $U_2$.\vspace{2mm}
\item
$n_1\stackrel{\eps}{\rightarrow}q_3$ is added to $F$; then $\tau=\eps$ and $r=q_3$.
Since $S_{q_2,ca}=\{n_1\}$, the transition $q_2\stackrel{ca/\eps}{\rightarrow}q_3$ is deleted from $U_2$,
and the $\eps$-transitions in paths $n_1\stackrel{a}{\rightarrow}z\stackrel{c}{\Rightarrow}q_2$ in $N$ are added to $E$: $q_1\stackrel{\eps}{\rightarrow}n_5$.\vspace{2mm}
\item
$n_2\stackrel{\eps}{\rightarrow}q_3$ is added to $F$; then $\tau=\eps$ and $r=q_3$.
Since $S_{q_2,db}=\{n_2\}$, the transition $q_2\stackrel{db/\eps}{\rightarrow}q_3$ is deleted from $U_2$,
and the $\eps$-transitions in paths $n_2\stackrel{b}{\rightarrow}z\stackrel{d}{\Rightarrow}q_2$ in $N$ are added to $E$: $q_1\stackrel{\eps}{\rightarrow}n_4$.\vspace{2mm}
\item
$q_1\stackrel{\eps}{\rightarrow}n_5$ is added to $F$; then $\tau=c$ and $r=q_2$. Since $S_{q_1,\eps}=\{q_1\}$, the transition $q_1\stackrel{\eps/c}{\rightarrow}q_2$ is deleted from $U_2$.\vspace{2mm}
\item
$q_1\stackrel{\eps}{\rightarrow}n_4$ is added to $F$; then $\tau=d$ and $r=q_2$. Since $S_{q_1,\eps}=\{q_1\}$, the transition $q_1\stackrel{\eps/d}{\rightarrow}q_2$ is deleted from $U_2$.
\end{itemize}
At the end, $U_2$ consists of $q_0\stackrel{\eps/da}{\rightarrow}q_2$. So this transition is useless in $P_1$.
\end{example}

Example \ref{exa:nfa2} shows that in step 2 of the {\em backward} procedure,
the path $y\stackrel{\tau^R}{\leadsto}r$ cannot be replaced by all paths $y\stackrel{\tau^R}{\Rightarrow}r$ in $N$.
Else $q_0\stackrel{\eps/da}{\rightarrow}q_2$ would be erroneously deleted from $U_2$, in view of
the transition $q_0\stackrel{\eps}{\rightarrow}n_1$ in $E$ and
the path $n_1\stackrel{a}{\rightarrow}q_1\stackrel{\eps}{\rightarrow}n_4\stackrel{d}{\rightarrow}q_2$ in $N$.

\paragraph{Correctness proof}
The following proposition is needed to show that only useful transitions in $P_1$ are deleted from $U_2$ during runs
of the {\em backward} procedure.

\vspace*{-2mm}

\begin{proposition}
\label{prop:nfa2}
Let $x\stackrel{\eps}{\rightarrow}y$ be a transition in $E$, and $y\stackrel{\tau^R}{\leadsto}r$ a path in $N$.
Then there is a path $m_0\stackrel{\rho^R}{\Rightarrow}x$ in $N$, for some $\rho$, such that
there is a run $(r,\tau\rho)\rightarrow^*(q_f,\eps)$ of $P_1$.
\end{proposition}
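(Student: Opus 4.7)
The plan is to prove Proposition \ref{prop:nfa2} by induction on the order in which the transition $x\stackrel{\eps}{\rightarrow}y$ was added to $E$ by the backward procedure. In the base case, the only transition in $E$ is $m_0\stackrel{\eps}{\rightarrow}q_f$, so $x=m_0$ and $y=q_f$; since $q_f$ is final, Lemma \ref{lem:nfa1} forces the path $y\stackrel{\tau^R}{\leadsto}r$ to be empty, with $\tau=\eps$ and $r=q_f$, and $\rho=\eps$ works trivially.

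For the inductive step, suppose $x\stackrel{\eps}{\rightarrow}y$ was added in step 3.3 of the iteration processing an earlier transition $x'\stackrel{\eps}{\rightarrow}y'\in E$, by way of a pda transition $\theta=q\stackrel{\sigma/\tau'}{\rightarrow}r'$ in $P_1$ with $x'\in S_{q,\sigma}$, $\sigma=\sigma'a$, and (by Lemma \ref{lem:nfa1}) a unique path $y'\stackrel{\tau'^R}{\leadsto}r'$. The transition $x\stackrel{\eps}{\rightarrow}y$ must appear in some path $x'\stackrel{a}{\rightarrow}z\stackrel{\sigma'^R}{\Rightarrow}q$ in $N$, which I would split as $x'\stackrel{a}{\rightarrow}z\stackrel{\sigma_1^R}{\Rightarrow}x\stackrel{\eps}{\rightarrow}y\stackrel{\sigma_2^R}{\Rightarrow}q$ with $\sigma'=\sigma_2\sigma_1$. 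The induction hypothesis, applied to $x'\stackrel{\eps}{\rightarrow}y'$ together with its path $y'\stackrel{\tau'^R}{\leadsto}r'$, yields some $\rho'$ with $m_0\stackrel{\rho'^R}{\Rightarrow}x'$ in $N$ and a run $(r',\tau'\rho')\rightarrow^*(q_f,\eps)$ of $P_1$.

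Setting $\rho=\sigma_1 a\rho'$, the concatenation $m_0\stackrel{\rho'^R}{\Rightarrow}x'\stackrel{a}{\rightarrow}z\stackrel{\sigma_1^R}{\Rightarrow}x$ supplies the required path $m_0\stackrel{\rho^R}{\Rightarrow}x$ in $N$. For the run, Lemma \ref{lem:strengthening} applied to the paths $y\stackrel{\tau^R}{\leadsto}r$ and $y\stackrel{\sigma_2^R}{\Rightarrow}q$ furnishes a run $(r,\tau)\rightarrow^*(q,\sigma_2)$ of $P_1$; extending by the common stack suffix $\sigma_1 a\rho'$, appending the $\theta$-move $(q,\sigma\rho')\rightarrow(r',\tau'\rho')$, and then concatenating the inductive run to $(q_f,\eps)$ gives $(r,\tau\rho)\rightarrow^*(q_f,\eps)$. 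The main delicacy will be the bookkeeping of the factorization of $\sigma'$ at the exact position where the new $\eps$-transition sits, and checking that Lemma \ref{lem:strengthening} is applicable because both invoked paths emanate from the same state $y$; one should also observe that the case $\sigma=\eps$ cannot arise here, since step 3.3 then adds no transitions to $E$.
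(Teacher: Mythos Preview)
Your proposal is correct and follows essentially the same approach as the paper's own proof: induction on the order in which $\eps$-transitions are added to $E$, with the base case handled by the initial $m_0\stackrel{\eps}{\rightarrow}q_f$, and the inductive step combining the induction hypothesis on the earlier $E$-transition, Lemma~\ref{lem:strengthening} for the run $(r,\tau)\rightarrow^*(q,\sigma_2)$, the $\theta$-move, and concatenation. The only cosmetic difference is that you explicitly separate the leading $a$-step of the path in step~3.3 (writing $\sigma=\sigma' a$ and $\rho=\sigma_1 a\rho'$), whereas the paper folds this into a single path $z\stackrel{\sigma^R}{\Rightarrow}q$; your additional remark that $\sigma=\eps$ cannot occur here is a helpful clarification.
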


\begin{proof}
By induction on the construction of $E$.
In the base case, $E=\{m_0\stackrel{\eps}{\rightarrow}q_f\}$. So $x=m_0$, $y=q_f$, $\tau=\eps$ and $r=q_f$.
So we can take $\rho=\eps$.

In the inductive case, suppose that due to a transition $z\stackrel{\eps}{\rightarrow}w$ in $E$, a path $w\stackrel{\upsilon^R}{\leadsto}s$ in $N$,
and a transition $\theta=q\stackrel{\sigma/\upsilon}{\rightarrow}s$ in $P_1$,
$\eps$-transitions in paths $z\stackrel{\sigma^R}{\Rightarrow}q$ in $N$ are added to $E$.
We show that the proposition holds for any transition $x\stackrel{\eps}{\rightarrow}y$ in any path $z\stackrel{\sigma^R}{\Rightarrow}q$;
say $z\stackrel{\sigma_2^R}{\Rightarrow}x\stackrel{\eps}{\rightarrow}y\stackrel{\sigma_1^R}{\Rightarrow}q$ with $\sigma=\sigma_1\sigma_2$.
Since $y\stackrel{\tau}{\leadsto}r$ and $y\stackrel{\sigma_1^R}{\Rightarrow}q$ are paths in $N$, by Lemma \ref{lem:strengthening},
there is a run $(r,\tau)\rightarrow^*(q,\sigma_1)$ of $P_1$. And by $\theta$, $(q,\sigma)\rightarrow(s,\upsilon)$ in $P_1$.
The transition $z\stackrel{\eps}{\rightarrow}w$ was already present in $E$ before the current extension of $E$.
Since $w\stackrel{\upsilon^R}{\leadsto}s$ is a path in $N$, by induction there is a path $m_0\stackrel{\rho^R}{\Rightarrow}z$ in $N$, for some $\rho$,
such that there is a run $(s,\upsilon\rho)\rightarrow^*(q_f,\eps)$ of $P_1$.
Concluding, there is a path $m_0\stackrel{\rho^R}{\Rightarrow}z\stackrel{\sigma_2^R}{\Rightarrow}x$ in $N$,
and a run $(r,\tau\sigma_2\rho)\rightarrow^*(q,\sigma_1\sigma_2\rho)\rightarrow(s,\upsilon\rho)\rightarrow^*(q_f,\eps)$ of $P_1$.
\qed
\end{proof}

The idea behind the next lemma is that if $(r,\tau\rho)$ is a reachable configuration of $P_1$,
and $\rho\neq\eps$, then any run $(r,\tau\rho)\rightarrow^*(q_f,\eps)$ of $P_1$ contains a move
$(q,\sigma\rho)\rightarrow(s,\upsilon\rho_2)$ in which a non-empty part $\rho_1$ of $\rho=\rho_1\rho_2$ is removed from the stack.

\begin{lemma}
\label{lem:nfa3}
Let $m_0\stackrel{\rho^R}{\Rightarrow}x\stackrel{\tau^R}{\Rightarrow}r$ be a path in $N$, and
$(r,\tau\rho)\rightarrow^*(q_f,\eps)$ a run of $P_1$, with $\rho\neq\eps$.
Then there is a path $x\stackrel{\sigma^R}{\Rightarrow}q$ in $N$ for some $\sigma$ and $q$, and
the path $m_0\stackrel{\rho^R}{\Rightarrow}x$ in $N$ splits into
$m_0\stackrel{\rho_2^R}{\Rightarrow}y\stackrel{\rho_1^R}{\Rightarrow}x$ for some $\rho_2$, $y$ and $\rho_1\neq\eps$,
where $y\stackrel{\rho_1^R}{\Rightarrow}x$ does not start with an $\eps$-transition, and
there is a path $y\stackrel{\eps}{\rightarrow}z\stackrel{\upsilon^R}{\leadsto}s$ in $N$ for some $z$, $\upsilon$ and $s$,
where $q\stackrel{\sigma\rho_1/\upsilon}{\rightarrow}s$ is a transition in $P_1$
that is applied to the configuration $(q,\sigma\rho)$ in the run $(r,\tau\rho)\rightarrow^*(q_f,\eps)$.
\end{lemma}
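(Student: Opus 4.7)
First, I would locate in the run $(r,\tau\rho)\rightarrow^*(q_f,\eps)$ of $P_1$ the first move that encroaches on the bottom segment $\rho$. Since $\rho\neq\eps$ while the final stack is empty, such a move must occur; denote it $(q,\sigma\rho)\rightarrow(s,\upsilon\rho_2)$, applying a transition $\theta=q\stackrel{\sigma\rho_1/\upsilon}{\rightarrow}s$ of $P_1$ with $\rho=\rho_1\rho_2$ and $\rho_1\neq\eps$. Every earlier move leaves $\rho$ intact, so $(r,\tau\rho)\rightarrow^*(q,\sigma\rho)$ is a prefix sub-run of $P_1$; stripping the untouched bottom $\rho$, $(r,\tau)\rightarrow^*(q,\sigma)$ is also a valid run of $P_1$.

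The heart of the argument is to produce a path $x\stackrel{\sigma^R}{\Rightarrow}q$ in $N$ mirroring this stripped run. I would prove this by induction on the number of moves in $(r,\tau)\rightarrow^*(q,\sigma)$, starting from the given path $x\stackrel{\tau^R}{\Rightarrow}r$. For the inductive step, let $r'\!\stackrel{\tau_0/\tau_0'}{\leftarrow}\!r$ be the first move applied, so $\tau=\tau_0\tau_0''$ and the new stack is $\tau_0'\tau_0''$. I would split the nfa path as $x\stackrel{\tau_0''^R}{\Rightarrow}x'\stackrel{\tau_0^R}{\Rightarrow}r$, picking $x'$ so that the second segment does not start with an $\eps$-transition when $\tau_0\neq\eps$ (and taking $x'=r$ when $\tau_0=\eps$, using $S_{r,\eps}=\{r\}$). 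In either case $x'\in S_{r,\tau_0}$, so when the {\em forward} procedure processed this (necessarily reachable) transition, it established some path $w'\stackrel{\tau_0'^R}{\leadsto}r'$ in $N$ and, on the final iteration of the procedure, added an $\eps$-transition $x'\stackrel{\eps}{\rightarrow}w'$. Concatenation yields $x\stackrel{(\tau_0'\tau_0'')^R}{\Rightarrow}r'$, closing the induction.

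Finally, because $\rho_1\neq\eps$, I would split the given path $m_0\stackrel{\rho^R}{\Rightarrow}x$ as $m_0\stackrel{\rho_2^R}{\Rightarrow}y\stackrel{\rho_1^R}{\Rightarrow}x$ with $y$ chosen so that $y\stackrel{\rho_1^R}{\Rightarrow}x$ does not start with an $\eps$-transition. Concatenating with $x\stackrel{\sigma^R}{\Rightarrow}q$ produces a path $y\stackrel{(\sigma\rho_1)^R}{\Rightarrow}q$ whose initial non-$\eps$ transition is labeled by the top symbol $a$ of $\rho_1$; writing $\sigma\rho_1=(\sigma\rho_1)'a$ then exhibits $y\in S_{q,\sigma\rho_1}$. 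Since $\theta$ is reachable, the {\em forward} procedure processed $\theta$, establishing a path $z\stackrel{\upsilon^R}{\leadsto}s$ in $N$ and, on the final iteration, adding the $\eps$-transition $y\stackrel{\eps}{\rightarrow}z$, delivering the required path $y\stackrel{\eps}{\rightarrow}z\stackrel{\upsilon^R}{\leadsto}s$. The principal obstacle is the inductive construction in the middle paragraph: it requires showing that $N$ faithfully tracks every move of an arbitrary reachable pda computation, with the non-$\eps$ starting constraint on path splits ensuring membership in the appropriate $S$-sets and thus the availability of the $\eps$-transitions added by the {\em forward} procedure.
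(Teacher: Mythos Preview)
Your proof is correct, and it rests on the same two ingredients as the paper's: tracking a $P_1$-run step by step through $N$ via the $\eps$-transitions supplied by the {\em forward} procedure, and singling out the first move that pops into $\rho$. The organizational difference is that the paper folds both into a single induction on the length of the full run $(r,\tau\rho)\rightarrow^*(q_f,\eps)$, with a case split at each step on whether the current move touches $\rho$ (their Case~1 terminates the induction; Case~2 extends the nfa path and recurses on a shorter run). You instead locate the encroaching move up front and then isolate, as a standalone induction, the fact that any $P_1$-run $(r,\tau)\rightarrow^*(q,\sigma)$ lifts a path $x\stackrel{\tau^R}{\Rightarrow}r$ in $N$ to a path $x\stackrel{\sigma^R}{\Rightarrow}q$. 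Your decomposition makes this path-tracking property explicit and reusable---it is essentially a relativized form of the $(\Leftarrow)$ direction of Proposition~\ref{prop:nfa1}, starting from an arbitrary $x$ rather than $m_0$---whereas the paper's single induction is more compact because it never names that intermediate lemma.
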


\begin{proof}
By induction on the number of moves in the run $(r,\tau\rho)\rightarrow^*(q_f,\eps)$.
Note that $\rho\neq\eps$ implies $r\neq q_f$. Let $\theta=r\stackrel{\zeta/\pi}{\rightarrow}t$ in $P_1$
be the first transition that is applied in the run $(r,\tau\rho)\rightarrow^*(q_f,\eps)$. We distinguish two cases:

\vspace{1mm}

\noindent
{\sc Case 1}: $\zeta=\tau\rho_1$ with $\rho=\rho_1\rho_2$, for some $\rho_1\neq\eps$ and $\rho_2$.
We take $\sigma=\tau$ and $q=r$. The path $m_0\stackrel{\rho^R}{\Rightarrow}x$ in $N$ splits into
$m_0\stackrel{\rho_2^R}{\Rightarrow}y\stackrel{\rho_1^R}{\Rightarrow}x$, where we choose $y$ such that
$y\stackrel{\rho_1^R}{\Rightarrow}x$ does not start with an $\eps$-transition.
Since there is a path $y\stackrel{\rho_1^R}{\Rightarrow}x\stackrel{\tau^R}{\Rightarrow}r$ in $N$, $\zeta=\tau\rho_1$,
and $\theta$ is in $P_1$, by the {\em forward} procedure, there is a path $y\stackrel{\eps}{\rightarrow}z\stackrel{\pi^R}{\leadsto}t$ in $N$.
So if we take $s=t$ and $\upsilon=\pi$, we are done.

\vspace{1mm}

\noindent
{\sc Case 2}: $\tau=\zeta\tau'$ for some $\tau'$. The path $x\stackrel{\tau^R}{\Rightarrow}r$ in $N$
splits into $x\stackrel{\tau'^R}{\Rightarrow}w\stackrel{\zeta^R}{\Rightarrow}r$, where we choose $w$ such that
$w\stackrel{\zeta^R}{\Rightarrow}r$ does not start with an $\eps$-transition.
Since $\theta$ is in $P_1$, by the {\em forward} procedure, there is a path $w\stackrel{\eps}{\rightarrow}v\stackrel{\pi^R}{\leadsto}t$ in $N$.
So there is a path $m_0\stackrel{\rho^R}{\Rightarrow}x\stackrel{(\pi\tau')^R}{\Rightarrow}t$ in $N$.
Moreover, the run $(r,\tau\rho)\rightarrow^*(q_f,\eps)$ is of the form
$(r,\tau\rho)\rightarrow(t,\pi\tau'\rho)\rightarrow^*(q_f,\eps)$.
By induction, there is a path $x\stackrel{\sigma^R}{\Rightarrow}q$ in $N$,
and the path $m_0\stackrel{\rho^R}{\Rightarrow}x$ in $N$ splits into $m_0\stackrel{\rho_2^R}{\Rightarrow}y\stackrel{\rho_1^R}{\Rightarrow}x$
with $\rho_1\neq\eps$,
where $y\stackrel{\rho_1^R}{\Rightarrow}x$ does not start with an $\eps$-transition, and
there is a path $y\stackrel{\eps}{\rightarrow}z\stackrel{\upsilon^R}{\leadsto}s$ in $N$,
where $q\stackrel{\sigma\rho_1/\upsilon}{\rightarrow}s$ is a transition in $P_1$ that is applied to the configuration $(q,\sigma\rho)$
in the run $(t,\pi\tau'\rho)\rightarrow^*(q_f,\eps)$.
\qed
\end{proof}

The following proposition is needed to show that all useful transitions in $P_1$ are eventually deleted from $U_2$.

\begin{proposition}
\label{prop:nfa3}
Let $m_0\stackrel{\rho^R}{\Rightarrow}x\stackrel{\eps}{\rightarrow}y\stackrel{\tau^R}{\Rightarrow}r$ be a path in $N$,
and $(r,\tau\rho)\rightarrow^*(q_f,\eps)$ a run of $P_1$. Then $x\stackrel{\eps}{\rightarrow}y$ is in $E$ when the
{\em backward} procedure terminates.
\end{proposition}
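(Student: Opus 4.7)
My plan is to prove Proposition \ref{prop:nfa3} by induction on the number of moves $k$ in the run $(r,\tau\rho)\rightarrow^*(q_f,\eps)$ of $P_1$, with Lemma \ref{lem:nfa3} as the workhorse of the inductive step.

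In the base case $k=0$ we have $\rho=\tau=\eps$ and $r=q_f$, so the hypothesis path collapses to $m_0\stackrel{\eps}{\Rightarrow}x\stackrel{\eps}{\rightarrow}y\stackrel{\eps}{\Rightarrow}q_f$, consisting only of $\eps$-transitions. I will argue that this forces $x=m_0$ and $y=q_f$, so $x\stackrel{\eps}{\rightarrow}y$ is exactly the initial element $m_0\stackrel{\eps}{\rightarrow}q_f$ of $E$. This uses two structural remarks about $N$. First, the only transition of $P_1$ whose popped string contains $b_0$ is the preprocessed $q_e\stackrel{b_0/\eps}{\rightarrow}q_f$, so every outgoing $\eps$-transition of $m_0$ added in step 5 of the \emph{forward} procedure must target $q_f$. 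Second, $q_f$ has no outgoing $P_0$-transitions and, by Lemma \ref{lem:nfa1} together with the shape of step 5, no outgoing edges in $N$ at all. Together these imply that any all-$\eps$ path from $m_0$ to $q_f$ is the single edge $m_0\stackrel{\eps}{\rightarrow}q_f$.

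For the inductive step with $\rho\neq\eps$, I apply Lemma \ref{lem:nfa3} to the path $m_0\stackrel{\rho^R}{\Rightarrow}y\stackrel{\tau^R}{\Rightarrow}r$ obtained by absorbing $x\stackrel{\eps}{\rightarrow}y$ into the preceding $\Rightarrow$. The lemma delivers a decomposition $m_0\stackrel{\rho_2^R}{\Rightarrow}y'\stackrel{\rho_1^R}{\Rightarrow}y$ with $\rho_1\neq\eps$ whose second factor starts with a non-$\eps$ edge, together with an $\eps$-transition $y'\stackrel{\eps}{\rightarrow}z$, a path $z\stackrel{\upsilon^R}{\leadsto}s$, and a transition $\theta=q\stackrel{\sigma\rho_1/\upsilon}{\rightarrow}s$ of $P_1$ applied at some point in the run. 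The tail run $(s,\upsilon\rho_2)\rightarrow^*(q_f,\eps)$ is strictly shorter, so the induction hypothesis applied to $m_0\stackrel{\rho_2^R}{\Rightarrow}y'\stackrel{\eps}{\rightarrow}z\stackrel{\upsilon^R}{\Rightarrow}s$ gives $y'\stackrel{\eps}{\rightarrow}z\in E$ at termination. When the \emph{backward} procedure eventually picks $y'\stackrel{\eps}{\rightarrow}z$, step 2 selects $z\stackrel{\upsilon^R}{\leadsto}s$, step 3 considers $\theta$, the membership $y'\in S_{q,\sigma\rho_1}$ is witnessed by the concatenated path $y'\stackrel{\rho_1^R}{\Rightarrow}y\stackrel{\sigma^R}{\Rightarrow}q$ (which begins non-$\eps$ by the lemma's guarantee), and step 3.3 adds all of its $\eps$-transitions to $E$. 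A short position argument shows that $x\stackrel{\eps}{\rightarrow}y$ is the final edge of $y'\stackrel{\rho_1^R}{\Rightarrow}y$ and is therefore among those added: the edge leaving $y'$ is non-$\eps$ while the edge leaving $x$ is $\eps$, so $y'$ must sit strictly before $x$ on the original path, forcing $x$ into the scanned subpath.

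The remaining case is $\rho=\eps$ with $k\geq 1$; here I shift the run by one move. Let $r\stackrel{\zeta/\pi}{\rightarrow}t$ be the first move and write $\tau=\zeta\tau'$. Split $y\stackrel{\tau^R}{\Rightarrow}r$ as $y\stackrel{\tau'^R}{\Rightarrow}\hat{x}\stackrel{\zeta^R}{\Rightarrow}r$ with the second factor not starting with $\eps$. The \emph{forward} procedure has by then introduced an $\eps$-transition $\hat{x}\stackrel{\eps}{\rightarrow}\hat{y}$ with $\hat{y}\stackrel{\pi^R}{\leadsto}t$ in $N$; concatenating yields $m_0\stackrel{\eps}{\Rightarrow}x\stackrel{\eps}{\rightarrow}y\stackrel{(\pi\tau')^R}{\Rightarrow}t$. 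The residual run $(t,\pi\tau')\rightarrow^*(q_f,\eps)$ uses $k-1$ moves, so the induction hypothesis with new parameters $\tau:=\pi\tau'$ and $r:=t$ gives $x\stackrel{\eps}{\rightarrow}y\in E$. I expect the main obstacle to be the positional bookkeeping in the $\rho\neq\eps$ case: one must carefully verify that the lemma's decomposition really places $x\stackrel{\eps}{\rightarrow}y$ inside the scanned subpath $y'\stackrel{\rho_1^R}{\Rightarrow}y$ rather than on the prefix $m_0\stackrel{\rho_2^R}{\Rightarrow}y'$, which is where the "non-$\eps$ start" clause in Lemma \ref{lem:nfa3} does its real work.
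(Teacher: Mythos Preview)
Your argument is correct, and the inductive step for $\rho\neq\eps$ is essentially the paper's: apply Lemma~\ref{lem:nfa3}, invoke the induction hypothesis on the outer $\eps$-edge produced by the lemma, and observe that step~3.3 of the \emph{backward} procedure then sweeps the path $y'\stackrel{\rho_1^R}{\Rightarrow}y\stackrel{\sigma^R}{\Rightarrow}q$ and hence picks up $x\stackrel{\eps}{\rightarrow}y$. Your positional check that $y'$ lies strictly before $x$ (because $y'$ emits a non-$\eps$ edge while $x$ emits an $\eps$ edge, and the path ends in $y$) is exactly the point where the ``non-$\eps$ start'' clause of Lemma~\ref{lem:nfa3} earns its keep.

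The one genuine difference is your choice of induction measure. The paper inducts on $|\rho|$; you induct on the number $k$ of moves in the run. Both are well-founded, but yours forces you to carry an extra case ($\rho=\eps$ with $k\geq 1$) that the paper dispatches for free. In fact that case is \emph{vacuous}: the structural argument you give for $k=0$ already applies whenever $\rho=\eps$, since it uses only the path $m_0\stackrel{\eps}{\Rightarrow}x\stackrel{\eps}{\rightarrow}y$ and not the run. It yields $x=m_0$, $y=q_f$, hence $\tau=\eps$ and $r=q_f$; and since $q_f$ has no outgoing transitions in $P_1$, there is no run of length $k\geq 1$ from $(q_f,\eps)$. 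So your third case never arises, and your reduction-by-one-move argument there, while sound, is unnecessary. Switching to induction on $|\rho|$ (with the same Lemma~\ref{lem:nfa3} step, noting $|\rho_2|<|\rho|$) would trim the proof to the paper's shape without losing anything.
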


\begin{proof}
We apply induction on the length of $\rho$.
In the base case, $\rho=\eps$. Since $m_0\stackrel{b_0}{\rightarrow}q_0$ is a transition in $N$, and $q_e\stackrel{b_0/\eps}{\rightarrow}q_f$
is the only transition in $P_0$ with an occurrence of $b_0$, by the {\em forward} procedure, the only possible outgoing $\eps$-transition of
$m_0$ in $N$ is $m_0\stackrel{\eps}{\rightarrow}q_f$. Hence the path $m_0\stackrel{\eps}{\Rightarrow}x\stackrel{\eps}{\rightarrow}y\stackrel{\tau^R}{\Rightarrow}r$
implies that $x=m_0$, $y=q_f$, $\tau=\eps$ and $r=q_f$.
Since $m_0\stackrel{\eps}{\rightarrow}q_f$ is in $N$, initially $E=\{m_0\stackrel{\eps}{\rightarrow}q_f\}$.

In the inductive case, $\rho\neq\eps$. Since there is a path $m_0\stackrel{\rho^R}{\Rightarrow}x\stackrel{\eps}{\rightarrow}y\stackrel{\tau^R}{\Rightarrow}r$ in $N$
and a run $(r,\tau\rho)\rightarrow^*(q_f,\eps)$ of $P_1$, by Lemma \ref{lem:nfa3},
there is a path $y\stackrel{\sigma^R}{\Rightarrow}q$ in $N$ for some $\sigma$ and $q$, and
the path $m_0\stackrel{\rho^R}{\Rightarrow}x$ in $N$ splits into $m_0\stackrel{\rho_2^R}{\Rightarrow}z\stackrel{\rho_1^R}{\Rightarrow}x$
for some $\rho_2$, $z$ and $\rho_1\neq\eps$,
where $z\stackrel{\rho_1^R}{\Rightarrow}x$ does not start with an $\eps$-transition, and
there is a path $z\stackrel{\eps}{\rightarrow}w\stackrel{\upsilon^R}{\leadsto}s$ in $N$ for some $w$, $\upsilon$ and $s$,
where $q\stackrel{\sigma\rho_1/\upsilon}{\rightarrow}s$ is a transition in $P_1$
that is applied to the configuration $(q,\sigma\rho)$ in the run $(r,\tau\rho)\rightarrow^*(q_f,\eps)$.
Since $\rho_2$ is shorter than $\rho$, by induction, $z\stackrel{\eps}{\rightarrow}w$ is eventually in $E$.
During the iteration of the {\em backward} procedure in which $z\stackrel{\eps}{\rightarrow}w$ is added to $F$,
in view of the transition $q\stackrel{\sigma\rho_1/\upsilon}{\rightarrow}s$ in $P_1$ and the paths $w\stackrel{\upsilon^R}{\leadsto}s$ and
$z\stackrel{\rho_1^R}{\Rightarrow}x\stackrel{\eps}{\rightarrow}y\stackrel{\sigma^R}{\Rightarrow}q$ in $N$, the $\eps$-transitions in
the latter path are added to $E$. So in particular, $x\stackrel{\eps}{\rightarrow}y$ is added to $E$.
\qed
\end{proof}

\begin{theorem}
\label{thm:nfa2}
The returned set $U_2$ consists of the useless transitions in $P_1$.
\end{theorem}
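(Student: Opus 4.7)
The plan is to establish both inclusions. For soundness, I must show that every transition deleted from $U_2$ during the \emph{backward} procedure is useful in $P_1$. For completeness, I must show that every useful transition in $P_1$ is eventually deleted from $U_2$. Propositions \ref{prop:nfa2} and \ref{prop:nfa3} have been tailored precisely to supply these two directions.

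For soundness, suppose $\theta=q\stackrel{\sigma/\tau}{\rightarrow}r$ is removed from $U_2$ in step 3.2 of some iteration. Then the iteration was triggered by some $x\stackrel{\eps}{\rightarrow}y\in E$; by Lemma~\ref{lem:nfa1} the unique path from $y$ to a final state in $N$ is $y\stackrel{\tau^R}{\leadsto}r$; and the test $x\in S_{q,\sigma}$ succeeded. Applying Proposition~\ref{prop:nfa2} to $x\stackrel{\eps}{\rightarrow}y$ and $y\stackrel{\tau^R}{\leadsto}r$ yields some $\rho$ with $m_0\stackrel{\rho^R}{\Rightarrow}x$ in $N$ and a run $(r,\tau\rho)\rightarrow^*(q_f,\eps)$ of $P_1$. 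The membership $x\in S_{q,\sigma}$ furnishes a path $x\stackrel{\sigma^R}{\Rightarrow}q$ in $N$, so concatenation gives $m_0\stackrel{(\sigma\rho)^R}{\Rightarrow}q$. By Proposition~\ref{prop:nfa1}, $(q,\sigma\rho)$ is reachable in $P_0$; since every transition used in a reaching run is itself reachable and hence kept in $P_1$, the configuration is also reachable in $P_1$. Applying $\theta$ yields $(r,\tau\rho)$, from which $(q_f,\eps)$ is then reached, so $\theta$ is useful.

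For completeness, suppose $\theta=q\stackrel{\sigma/\tau}{\rightarrow}r$ is useful in $P_1$: there is a run of $P_1$ from $(q_0,b_0)$ to $(q,\sigma\rho)$, the move $(q,\sigma\rho)\rightarrow(r,\tau\rho)$ via $\theta$, and then $(r,\tau\rho)\rightarrow^*(q_f,\eps)$. By Proposition~\ref{prop:nfa1} there is a path $m_0\stackrel{(\sigma\rho)^R}{\Rightarrow}q$ in $N$, which I split as $m_0\stackrel{\rho^R}{\Rightarrow}x\stackrel{\sigma^R}{\Rightarrow}q$ choosing $x$ so that the second segment does not start with an $\eps$-transition; this is exactly the condition $x\in S_{q,\sigma}$. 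Since $\theta$ is reachable it was processed by the \emph{forward} procedure, which created a path $y\stackrel{\tau^R}{\leadsto}r$ in $N$ together with $\eps$-transitions from every state in $S_{q,\sigma}$ to $y$, in particular from our $x$. Now $m_0\stackrel{\rho^R}{\Rightarrow}x\stackrel{\eps}{\rightarrow}y\stackrel{\tau^R}{\Rightarrow}r$ is a path in $N$ and $(r,\tau\rho)\rightarrow^*(q_f,\eps)$ is a run of $P_1$, so Proposition~\ref{prop:nfa3} places $x\stackrel{\eps}{\rightarrow}y$ in $E$ at termination; since the \emph{backward} procedure only stops once $E\setminus F=\emptyset$, it was also moved into $F$ in some iteration. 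In that iteration step~2 produces exactly the path $y\stackrel{\tau^R}{\leadsto}r$ (unique by Lemma~\ref{lem:nfa1}); step~3 then enumerates every transition of $P_1$ with target $r$ and pushed word $\tau$, including $\theta$; and step~3.1 succeeds since $x\in S_{q,\sigma}$, so $\theta$ is deleted from $U_2$ in step~3.2.

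The main obstacle is the completeness direction, specifically the need to choose the splitting point $x$ so that simultaneously $x\in S_{q,\sigma}$ holds (which is what drives the no-$\eps$-prefix choice) and the \emph{forward} procedure has genuinely created the $\eps$-transition $x\stackrel{\eps}{\rightarrow}y$ into the head of the $\tau^R$-path terminating at $r$. Proposition~\ref{prop:nfa3} then handles the potentially long backward chain of $\eps$-transitions accumulating in $E$, but one must still carefully match the specific transition processed in some iteration of the \emph{backward} procedure with the specific $\theta$ under consideration, which is why both ingredients of step~3 (the uniqueness of the path from $y$ and the membership test $x\in S_{q,\sigma}$) are invoked at that point.
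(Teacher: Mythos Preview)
Your proof is correct and follows essentially the same approach as the paper's own proof: Proposition~\ref{prop:nfa2} for soundness, Proposition~\ref{prop:nfa3} for completeness, with Proposition~\ref{prop:nfa1} and Lemma~\ref{lem:nfa1} supplying the links between $N$ and $P_1$. Your argument is slightly more explicit in a couple of places (e.g., noting that reachability in $P_0$ transfers to $P_1$, and that membership in $E$ forces eventual membership in $F$), but structurally it is the same proof.
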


\begin{proof}
Suppose the transition $\theta=q\stackrel{\sigma/\tau}{\rightarrow}r$ in $P_1$ is not in $U_2$. Since $\theta$ is in $P_1$,
by the {\em forward} procedure, there is a path $y\stackrel{\tau^R}{\leadsto}r$ in $N$. Since $\theta\not\in U_2$, while running
the {\em backward} procedure, for some $x$, the transition $x\stackrel{\eps}{\rightarrow}y$ was found to be in $E$,
and a path $x\stackrel{\sigma^R}{\Rightarrow}q$ was found to be in $N$.
In view of the transition $x\stackrel{\eps}{\rightarrow}y$ in $E$ and the path $y\stackrel{\tau^R}{\leadsto}r$ in $N$,
by Proposition \ref{prop:nfa2}, there is a path $m_0\stackrel{\rho^R}{\Rightarrow}x$ in $N$, for some $\rho$,
such that there is a run $(r,\tau\rho)\rightarrow^*(q_f,\eps)$ of $P_1$.
In view of the path $m_0\stackrel{\rho^R}{\Rightarrow}x\stackrel{\sigma^R}{\Rightarrow}q$ in $N$,
by Proposition \ref{prop:nfa1}, $(q,\sigma\rho)$ is reachable in $P_1$.
By applying $\theta$ to this configuration, $(r,\tau\rho)$ is reached. Since moreover
there is a run $(r,\tau\rho)\rightarrow^*(q_f,\eps)$ of $P_1$, $\theta$ is useful in $P_1$.

Vice versa, suppose $\theta$ is useful in $P_1$. Then there is a run
$(q_0,b_0)\rightarrow^*(q,\sigma\rho)\rightarrow(r,\tau\rho)\rightarrow^*(q_f,\eps)$ of $P_1$.
In view of the run $(q_0,b_0)\rightarrow^*(q,\sigma\rho)$,
by Proposition \ref{prop:nfa1}, there is a path $m_0\stackrel{\rho^R}{\Rightarrow}x\stackrel{\sigma^R}{\Rightarrow}q$ in $N$,
where we choose $x$ such that $x\stackrel{\sigma^R}{\Rightarrow}q$ does not start with an $\eps$-transition.
Since $\theta$ is in $P_1$, by the {\em forward} procedure, there is a path $x\stackrel{\eps}{\rightarrow}y\stackrel{\tau^R}{\leadsto}r$ in $N$.
In view of the run $(r,\tau\rho)\rightarrow^*(q_f,\eps)$, by Proposition \ref{prop:nfa3}, $x\stackrel{\eps}{\rightarrow}y$ is eventually in $E$.
In view of the paths $y\stackrel{\tau^R}{\leadsto}r$ and $x\stackrel{\sigma^R}{\Rightarrow}q$ in $N$, during the iteration
of the {\em backward} procedure in which $x\stackrel{\eps}{\rightarrow}y$ is added to $F$, $\theta$ is deleted from $U_2$.
\qed
\end{proof}

\paragraph{Complexity analysis}
Computing $U_2$ takes at most $O(Q^4T)$:
For each of the at most $O(Q^2)$ $\eps$-transitions in $E$, and for at most $T$
transitions in $P_1$, in step 3.3 a forward scan is performed over the $\eps$-transitions in $N$, which takes at most $O(Q^2)$.

\section{Implementation}

We made a prototype implementation of the algorithm, using a test suite of more than twenty pda's.
The largest pda, with 295 transitions, was obtained from the grammar of the programming language C.
This resulted in an nfa with 339 states and 1030 transitions, of which 695 $\eps$-transitions,
and took 11 seconds on a 2GHz processor.

Achieving this performance required two optimizations, both limiting
the influence of $\eps$-transitions.
The first concerns determining the set of states leading to $q$ in
step 2 of the {\em forward} procedure.
This set is constructed by following paths backwards from $q$,
which may lead through webs of $\eps$-transitions, causing a considerable slow-down.
These $\eps$-transitions were created in step 5 of the {\em forward} procedure.
The optimization consists of computing for each state $s$, in step 5,
the set $B(s)$ of states that can reach $s$ through $\eps$-transitions only.
If more $\eps$-transitions are added in step 5 during a next iteration, $B$ is updated.
The second optimization concerns memoization of $\eps$-transitions as they are encountered
on paths to $q$ in step 3.3 of the {\em backward} procedure.

No further optimizations were applied. In fact, all sets were implemented as arrays;
choosing more advanced data structures would certainly improve the efficiency.

\bibliographystyle{plain}


\end{document}